\documentclass[%
 reprint
]{revtex4-2}
\usepackage{iceberg}

\begin{document}

\title{2 Fast 2 Surgery: Fast surgery on \qcode{QLDPC} codes with $\tilde{O}(n(k + d))$ space overhead}

\author{Nou\'edyn Baspin}
\affiliation{%
 Iceberg Quantum\\
 Sydney, Australia
}%
\author{{\scriptsize\texttt{nouedyn@iceberg-quantum.com}}}
\newcommand{\measured}{\mathop{\mathfrak{l}}}
\newcommand{\lpreimage}{\mathop{L}}
\newcommand{\tester}{\mathop{H_{\lpreimage}}}

\begin{abstract}
We show that the fast surgery framework of \cite{baspin2025fastsurgeryquantumldpc} can be instantiated using at most $\tilde{O}(n(k + d))$ ancilla qubits, improving on the previously achievable $\tilde{O}(nkd)$ overhead. This result also establishes that fast surgery can be performed at a cost comparable to the $\tilde{O}(nk)$ overhead required for regular surgery \cite{cowtan2025parallel}. We build the surgery ancilla by first starting with a dense, but expanding system; and then applying expansion-preserving sparsification methods from gauging \cite{williamson2024gauging} and code embeddings \cite{yuan2026unified, yuan2026quantumweightreductionlayer}.
\end{abstract}

\maketitle

\section{Intro}

The prospect of practical quantum computers is hampered by the operational costs they are expected to incur \cite{campbell2017roads}. 
Compared to their classical counterparts, state-of-the-art quantum computers consume significant space and time resources in order to run even the simplest algorithms \cite{Litinski2019, gidney2021twenty}.
The necessity to reduce this footprint has thus shaped much of the current QEC landscape \cite{Bravyi2024,xu2024constant,gidney2025factor2048bitrsa, webster2026pinnacle, cain2026tenthousand}.

Space overheads are generally attributable to the type of codes used \cite{gottesman2014constant,cohen2022surgery, nguyen2025constantspace,bhardwaj2026highrate}. 
For a given scheme, it is conventionally written that one uses $n$ physical qubits to encode $k$ logical qubits. 
Naturally, the smaller the ratio $n/k$ is, the smaller the space overhead. 
The need for code families satisfying a constant $n/k$, for growing $k$, spurred the early developments \cite{freedman2002systolic, kim2007hurwitz, zemor2009cayley, delfosse2013tradeoffs, tillich2014hypergraph, guth2014quantum} of \qcode{QLDPC} codes \cite{breuckmann2021balanced}. 
It can be useful to note that such a quest has a natural horizon: once constant-rate families are discovered, further improvements can only reduce the space overhead by a constant factor.

Conversely, time overheads emerge from inefficiencies in the handling of the data. As codes are designed to insulate the logical information they protect from the exterior environment, the best codes will also make it harder to operate said information. This can be coarsely quantified by $d$, the number of errors a code can undergo without failing; and aside from a few particularly well behaved exceptions \cite{breuckmann2024fold, eberhardt2024fold}, performing a gate will typically require $\Omega(d)$ time \cite{Litinski2019}. 

Among existing methods to perform such logical gates, lattice surgery \cite{bombin2009measurements, horsman2012surgery} has emerged as a prominent component of low-overhead architectures due to its ability to generalise to \qcode{QLDPC} codes \cite{cohen2022surgery, williamson2024gauging, cowtan2025parallel}.
Despite surgery also suffering from a $\Omega(d)$ delay in its original formulation, recent results have shown that it can be made fast \cite{baspin2025fastsurgeryquantumldpc, cowtan2025fast} -- i.e. performed in $O(1)$-time -- generally at the cost of an increased space footprint. While in some special cases constant space overhead can still be achieved \cite{chang2026constanttime}, the most general construction for fast surgery requires an additional $\tilde{O}(nkd)$ physical qubits \cite{baspin2025fastsurgeryquantumldpc}.

In this work we show that fast surgery can be performed with $\tilde{O}(n(k + d))$ physical qubits, improving on $\tilde{O}(nkd)$, and bringing the cost in line with the $\tilde{O}(nk)$ overhead of regular surgery \cite{cowtan2025parallel}. Our main result is \Cref{theorem:main}, which constructs a \qcode{QLDPC} surgery ancilla system with large enough expansion and co-systole to satisfy the requirements for fast surgery in \cite{baspin2025fastsurgeryquantumldpc}.

\begin{restatable}{theorem}{mainthm}
\label{theorem:main}
    Let $C_\bullet$ be a \qcode{QLDPC} code, and $\measured \subset H_1(C)$ an arbitrary subspace of logicals of dimension $\kappa = \dim \measured$. Then there exists $D^{\text{sparse}}_\bullet$, $\gamma_\bullet^{\text{sparse}}: D^{\text{sparse}}_\bullet \to C_\bullet$:

    \begin{equation}
        \begin{tikzcd}[ampersand replacement=\&]
        	{D_2^{\text{sparse}}} \& {D_1^{\text{sparse}}} \& {D_0^{\text{sparse}}} \\
        	{C_2} \& {C_1} \& {C_0}
        	\arrow[from=1-1, to=1-2]
        	\arrow["{\gamma_2^{\text{sparse}}}", from=1-1, to=2-1]
        	\arrow[from=1-2, to=1-3]
        	\arrow["{\gamma_1^{\text{sparse}}}", from=1-2, to=2-2]
        	\arrow["{\gamma_0^{\text{sparse}}}", from=1-3, to=2-3]
        	\arrow[from=2-1, to=2-2]
        	\arrow[from=2-2, to=2-3]
        \end{tikzcd}
    \end{equation}
    with the following properties
    \begin{enumerate}
        \item $D^{\text{sparse}}$ is $1$-expanding relative to $\gamma_1^{\text{sparse}}$
        \item $\im [\gamma_1^{\text{sparse}}] = \measured$
        \item $\syst^1(D^{\text{sparse}}) \geq \syst^1(C)$ 
        \item $\partial_1^{D^{\text{sparse}}}$ and $\gamma_\bullet^{\text{sparse}}$ are $O(1)$-sparse
        \item $\dim D_0^{\text{sparse}} \leq \tilde{O}(n(k-\kappa + d))$, where $n = \dim C_1$
    \end{enumerate}

\end{restatable}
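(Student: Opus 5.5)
We prove \Cref{theorem:main} in two stages, following the abstract: first build a dense ancilla complex $D_\bullet$ with a chain map $\gamma_\bullet : D_\bullet \to C_\bullet$ that already has properties 1--3 and the right size; then sparsify it so that property 4 holds while properties 1--3 are preserved up to constants.

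\textbf{Dense stage.} Choose a basis $\ell_1,\dots,\ell_\kappa$ of $\measured$ and extend it by representatives of the remaining $k-\kappa$ classes of $H_1(C)$. Define $D_1$ to contain the supports of the $\ell_i$, so that $\gamma_1$ is an inclusion of coordinates. Attach to it a cone/mapping-cylinder style $D_0$ whose coboundary checks are the $X$-type checks of $C$ restricted to the measured support. Then add redundant expander checks, for example a random or good-expander family of $\tilde O(n)$ checks, so that every cochain in $D^1$ not in $\im \gamma^*$ plus coboundaries has coboundary weight at least proportional to its reduced weight. This gives $1$-expansion relative to $\gamma_1$.

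We would prove $\im[\gamma_1]=\measured$ directly, by arranging that the homology of $D$ maps onto exactly the span of the $\ell_i$. To do this we kill the unmeasured logicals using extra $0$-cells placed along the $k-\kappa$ complementary classes. This is where the $k-\kappa$ term in the size bound comes from: roughly $n$ cells per complementary class, up to logs.

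The co-systole bound $\syst^1(D)\ge \syst^1(C)$ follows because any nontrivial cocycle of $D$ pulls back through $\gamma$ to a nontrivial cocycle of $C$, and the map does not decrease weight. The $d$ term in the size bound arises from thickening: we take $O(d)$ layers, as in hypergraph-product thickening with a repetition code of length $d$, which are needed so the expansion constant survives. The dense object therefore has size $\tilde O(n(k-\kappa+d))$.

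\textbf{Sparsification stage.} Apply the gauging sparsification of \cite{williamson2024gauging}: cellulate each dense check by a cycle basis of bounded degree, using Freedman--Hastings decongestion to get $O(\log^2)$ overhead per cycle. Then apply the layer/embedding weight reduction of \cite{yuan2026unified, yuan2026quantumweightreductionlayer}. Both steps come with stated guarantees that
\begin{itemize}
\item the expansion (Cheeger constant) drops by at most a constant,
\item the co-systole does not drop,
\item the size grows by at most polylog factors.
\end{itemize}
We then verify that the composite chain map $\gamma^{\text{sparse}}$ remains a chain map with $O(1)$-sparse components, and that its image on $H_1$ is unchanged. The latter holds because the sparsification embeds the old complex quasi-isomorphically.

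\textbf{Main obstacle.} The main obstacle is the interaction between relative expansion and sparsification. The cited methods preserve absolute expansion, whereas we need expansion relative to $\gamma_1$, that is, modulo pullbacks of logicals. We would first try to reduce the relative notion to an absolute one on a quotient or cone complex, apply the sparsification there, and lift back. We would check carefully that the new cells introduced by decongestion do not create low-weight cocycles that are nontrivial relative to $\gamma$.
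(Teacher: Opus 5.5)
Your two-stage outline (dense expanding ancilla, then sparsify) matches the paper's, but both stages have genuine gaps.

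\textbf{Dense stage.} Your co-systole argument runs the wrong way. $\gamma^*$ maps $C^1\to D^1$, not $D^1\to C^1$. Suppose, as you propose, $D_1$ is the measured support and $\gamma_1$ is a coordinate inclusion. Then the zero-extension of a cocycle of $D$ is only orthogonal to $\partial_2^C(\im\gamma_2)$, which is in general a proper subspace of $B_1(C)$. So it need not be a cocycle of $C$, and nothing rules out short nontrivial cocycles of $D$. The paper sidesteps this by keeping a full copy of the code: $D^{\text{dense}}=\fib(\sigma)$ with $D_2=C_2$, $D_1=C_1$, $D_0=C_0\oplus\mathbb{F}_2^s$, and $\gamma$ the identity on the $C$ parts. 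Since $\sigma\partial_2=0$, one has $Z^1(D)=Z^1(C)$ and $B^1(D)=B^1(C)+\im\sigma^T$ with $\im\sigma^T\subseteq Z^1(C)$, which gives $\syst^1(D)\ge\syst^1(C)$ immediately.

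You also never establish relative expansion, and your size count does not close. Extra checks must be combinations of existing ones so that the kernel (hence the homology) is unchanged, so random ones are dense. Each dense check costs $\Theta(n)$ cells after sparsification, so $\tilde O(n)$ of them gives $\tilde O(n^2)$.

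The missing idea is that relative expansion for $f\in D_1$ is capped at $\syst_1(C)=d$, so a $d$-threshold property suffices: every $u\in C_1$ outside $L=\pi^{-1}(\measured)$ has $\|\partial_1u\|+\|\sigma u\|\ge d$. The nonzero syndromes of weight $<d$ in the image of $[\sigma_{\text{cocycles}};\partial_1]$ number at most $(em/d)^d$. Hence a random $R$ with $t=O(d\log n)$ rows makes all of them heavy, and $\sigma_{\text{boost}}=R[\sigma_{\text{cocycles}};\partial_1]$ does not change the kernel. Add the $q=k-\kappa$ rows of $\sigma_{\text{cocycles}}$, which are cocycles spanning $\measured^\perp$ that detect the unmeasured classes rather than lying along them. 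This gives only $s=O(q+d\log n)$ dense rows, and $s\cdot n$, not thickening, is the source of the $\tilde O(n(k-\kappa+d))$ bound.

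\textbf{Sparsification stage.} The gauging and layer-code results you cite do not state that relative expansion survives. Moreover, standard layer codes replace qubits by surface-code layers, which are not expanders. You flag this as the main obstacle but do not resolve it, so property~1 remains unproven. The paper builds two gadgets for exactly this purpose.

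First, it uses non-decongested gauging; decongestion is unneeded because only $\partial_1$, not the metacheck map $\partial_2$, must be sparse. Each row $\vec\sigma_j$ becomes a connected bounded-degree expander on its support, filled with faces so that $H_2=0$. The induced map $C_1\to H_1(\mathcal{G})\cong\mathbb{F}_2^s$ is exactly $\sigma$, so every violated row forces a violated vertex in its gadget. This gives $\|\partial^{\fib(\tau)}(a,e)\|\ge\|\partial^{\fib(\sigma)}a\|$, preserving the $d$-threshold property with $\dim\mathcal{G}_1\le sn$.

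Second, qubits now have degree up to $s$, so a dual-gauging gadget replaces each qubit by a cluster of $s+1$ vertices of a bounded-degree graph with Cheeger constant $\omega+2$. One distinguished vertex carries $\partial_1$, and each other vertex carries one connection into the gauging gadgets. Decoding each cluster to a constant vector can cancel at most $(\omega+1)\|\tilde a_{\mathrm{corr}}\|$ syndrome but creates at least $(\omega+2)\|\tilde a_{\mathrm{corr}}\|$ edge syndrome. Hence $\|\partial\tilde b\|\ge\|\partial b_{\mathrm{dec}}\|+\|\tilde a_{\mathrm{corr}}\|$, which converts the $d$-threshold property into $1$-relative expansion. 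Connectedness of the clusters makes their $H^1$ one-dimensional, so homology and co-systole are preserved. This expander-cluster layer is the ingredient your plan is missing.
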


The primary practical limitation of our construction is that $\partial_2^{D^{\text{sparse}}}$, corresponding to the metachecks, is not necessarily sparse. 

\section{Prelimaries}

An $m$-dimensional chain complex, $(C_\bullet, \partial_\bullet)$, over $\F_2$ is a collection of $\F_2$-vector spaces, $\left\{ C_i \right\}_{i=0}^{m}$, and linear boundary maps $\left\{ \partial_i: C_{i+1} \to C_i \right\}_{i=1}^{m}$
\begin{equation}
    C_\bullet =
    \begin{tikzcd}
        C_D \arrow[r, "\partial_D"] & C_{D-1} \arrow[r, "\partial_{D-1}"] & \cdots \arrow[r, "\partial_2"] & C_1 \arrow[r, "\partial_1"] & C_0
    \end{tikzcd}\, ,
\end{equation}
such that the boundary maps satisfy $\partial_{i-1} \circ \partial_{i} = 0$. 
We assume that each vector space $C_i$ comes with a distinguished basis and its elements are called $i$-chains.
Moreover, the elements of $B_i = \im \partial_{i+1}$ are called $i$-boundaries, and the elements of $Z_i = \ker \partial_i$ are called $i$-cycles. 
The homology groups are defined as $H_i\left(C_\bullet\right)$, i.e., they correspond to cycles that are not boundaries
\begin{equation}
    H_i\left(C_\bullet\right) = \ker \partial_i / \im \partial_{i+1}\, .
\end{equation}
We can similarly define the cohomology groups, which for chain complexes over $\F_2$-vector spaces are defined as
\begin{equation}
    H^i\left(C^{\bullet}\right) = \ker \delta_i / \im \delta_{i-1}\, ,
\end{equation}
where $\delta_i = \partial_{i+1}^T$ for the considered $\F_2$ vector spaces. 
Since the vector spaces are equipped with distinguished bases, we define a norm $\left \| \cdot \right \|$ on the $i$-chains given by the Hamming weight. 
Then the $i$-systole and $i$-cosystole are defined by
\begin{align}
    \text{Syst}_i\left(C_{\bullet}\right) &= \min \left\{ \left\| v \right\| \mid v \in Z_i \setminus B_i \right\}, \\
    \text{Syst}^i\left(C^{\bullet}\right) &= \min \left\{ \left\| v \right\| \mid v \in Z^i \setminus B^i\right\}.
\end{align}
We say a linear map $A$ is $\omega$-bounded if $\norm{Av} \leq \omega \norm{v}$ for any $v$. In a finite dimensional space this corresponds to $A$ having column weight at most $\omega$. We further say that $A$ is $\omega$-sparse if both $A$ and $A^T$ are $\omega$-bounded.

\subsection*{Maps, Cones, and Fibers}
There exists a natural notion of a morphism for chain complexes. 
Suppose we have two chain complexes, $\left(C_\bullet, \partial^C\right)$ and $\left(D_\bullet, \partial^D\right)$, along with a set of homomorphisms $\gamma_i: D_i \to C_i$. 
If the homomorphisms and the boundary maps satisfy the commutativity condition
\begin{equation}
    \gamma \partial = \partial \gamma \, ,
\end{equation}
then the map $\gamma_i$ describes a chain map between the complexes $C_\bullet$ and $D_\bullet$, which we refer to as \emph{homomorphic}.
We can represent this visually via the following commuting diagram
\begin{equation}
    \begin{tikzcd}
        \cdots \arrow[r] & D_{i+1} \arrow[r] \arrow[d, "\gamma_{i+1}"] & D_{i} \arrow[r] \arrow[d, "\gamma_{i}"] & D_{i-1} \arrow[r] \arrow[d, "\gamma_{i-1}"]& \cdots \\
        \cdots \arrow[r] & C_{i+1}  \arrow[r] & C_{i}  \arrow[r] & C_{i-1}  \arrow[r] & \cdots
    \end{tikzcd} \, .
\end{equation}
We can define a new chain complex from the above complex. 

\begin{definition}
    Let $(C_\bullet, \partial^C), (D_\bullet, \partial^D)$ be $m$-dimensional chain complexes, and $\gamma: D_\bullet \to C_\bullet$ a chain map. Then the mapping cone $\cone(\gamma)$ of $\gamma$ is defined as:
    \begin{gather}
        \cone(\gamma)_i = C_i \oplus D_{i-1} \\
        \partial^{\cone(\gamma)}_i (c, d) = (\partial_i^C c + \gamma_{i-1} d, \partial^D_{i-1} d) 
    \end{gather}
    Now, for a chain map $\tau_\bullet: C_\bullet \to D_\bullet$, we can define the mapping fiber $\fib(\tau)$ of $\tau$ as:
    \begin{gather}
        \fib(\tau)_i = D_{i+1} \oplus C_i \\
        \partial^{\fib(\tau)}_i (d,c) = (\partial^D_{i+1} d + \tau_i c, \partial^C_i c)
    \end{gather}
\end{definition}

\begin{remark}
    The difference between $\cone(\gamma)$ and $\fib(\gamma)$ is only a matter of grading. In the first, $\cone(\gamma_i)$ is indexed in terms of the "left" component $C_i$, while $\fib(\gamma)_i$ would be indexed relative to the "right" component $D_i$.
\end{remark}

\subsection*{Systolic expansion}
There is a notion of expansion in chain complexes which generalizes the notion of edge expansion in graphs, which can be viewed as $1$-dimensional chain complexes.
\begin{definition}[Systolic expansion \cite{gromov2010singularities, kaufman2016isoperimetric, evra2016bounded}] \label{def:systolic_expansion}
    Let $D_\bullet$ be an $m$-dimensional complex and let $\epsilon > 0$. 
    For $i < m$ we say that the vector space $D_i$ is $\epsilon$-systolic expanding if
    \begin{gather}
        \forall f \in D_i, \quad \norm{\partial_i f}  \geq \epsilon \cdot  \dist(f, Z_i) \\
        \dist(f, Z_i) = \min_{z \in Z_i} \norm{f + z}
    \end{gather}
\end{definition}
A chain homomorphism from $D_\bullet$ to $C_\bullet$ naturally gives rise to a notion of \emph{relative} expansion. 
\begin{definition}[Relative expansion \cite{baspin2025fastsurgeryquantumldpc}]
Let $D_\bullet$ be an $m$-dimensional complex, $\gamma : D_\bullet \rightarrow C_\bullet$ a chain map, and let $\epsilon > 0$. 
    For $i < m$ we say that $D$ is $\epsilon$-systolic expanding relative to $\gamma_i$ if for all $f \in D_i$:
    \begin{equation}
        \norm{\partial_i f}  \geq \epsilon  \min \left \{  \dist(\gamma_i (f), \gamma_i  (Z_i(D))), \syst_i(C) \right \}
    \end{equation}
\end{definition}

\section{Main results}

Fundamentally, constructing a viable surgery ancilla is a problem about \emph{map sparsification}. For the sake of notation, let us fix
\begin{equation}
    C_\bullet = C_2 \to C_1 \to C_0, \quad \measured \subseteq H_1(C)
\end{equation}
Where $C_\bullet$ corresponds to the code of interest, and $\measured$ corresponds to the subspace of logicals we want to measure. Then the measurement could be simply realised by picking a lift $\iota: \measured \to C_1$, and the merged code would then be described by $\cone(\iota)$ \cite{Ide2025}.
\begin{equation}
    \begin{tikzcd}
	& {\measured} & \\
	{C_2} & {C_1} & {C_0}
	\arrow["\iota", from=1-2, to=2-2]
	\arrow[from=2-1, to=2-2]
	\arrow[from=2-2, to=2-3]
\end{tikzcd}
\end{equation}
However, crucially, the map $\iota$ is generally non-sparse, and yields an undesirable high weight ancilla. In order to correct for this, one can then try to sparsify $\iota$; this is the approach adopted in \cite{hastings2023quantumweightreduction, williamson2024layer, williamson2024gauging}, where $\iota$ takes the interpretation of additional stabilisers added to $C_\bullet$. 
Formally \cite{yuan2026unified, yuan2026quantumweightreductionlayer}, this sparsification replaces the ancilla $\measured$ by a sparse complex $D_\bullet^{\measured}$, and a sparse chain map $\gamma^{\measured}: D_\bullet^{\measured} \to  C_\bullet$, such that the maps $[\gamma_1^{\measured}]: H_1(D^{\measured}) \to H_1(C)$, $[\iota]: \measured \to H_1(C)$ satisfy:
\begin{equation}
    \im [\gamma_1^{\measured}] = \im [\iota]
\end{equation}
And the merged cone is then described as $\cone(\gamma^{\measured})$:
\begin{equation}
    \begin{tikzcd}
	{D^{\measured}_2} & {D^{\measured}_1} & {D^{\measured}_0} \\
	{C_2} & {C_1} & {C_0}
	\arrow[from=1-1, to=1-2]
	\arrow[from=1-1, to=2-1]
	\arrow[from=1-2, to=1-3]
	\arrow[from=1-2, to=2-2]
	\arrow[from=1-3, to=2-3]
	\arrow[from=2-1, to=2-2]
	\arrow[from=2-2, to=2-3]
\end{tikzcd}
\end{equation}

For fast surgery in particular \cite{baspin2025fastsurgeryquantumldpc}, the ancilla $D_\bullet$ (we drop the $\measured$ superscript) is itself generally built as a mapping fiber $D_\bullet = \fib(\tau)$ of some map $\tau: C_\bullet \to E_\bullet$.

 This structure is chosen as it is particularly amenable to preserving some desirable properties from $C_\bullet$, like its co-systole. Once that requirement is satisfied, the most notable remaining difficulty in building a suitable $D_\bullet$ such that it also exhibits (relative) systolic expansion \cite{baspin2025fastsurgeryquantumldpc}. 
 
 In order to resolve the problem of producing expansion, we follow an approach inverse to \cite{baspin2025fastsurgeryquantumldpc}: we begin with a (dense) ancilla system $D_\bullet^{\text{dense}}$ that immediately exhibits expansion.
\begin{equation}
    D_\bullet^{\text{dense}} = \fib\left(\begin{tikzcd}[ampersand replacement=\&]
    	{C_2} \& {C_1} \& {C_0} \\
    	\& T
    	\arrow[from=1-1, to=1-2]
    	\arrow[from=1-2, to=1-3]
    	\arrow["\sigma", from=1-2, to=2-2]
    \end{tikzcd}\right)
\end{equation}
For some $\sigma$ we describe in the following. 
 Write $\pi : Z_1 \to H_1(C)$ the projection from the cycles to the corresponding homology group. Then we define
\begin{equation}
    \lpreimage = \pi^{-1} \measured \subset Z_1(C)
\end{equation}
Writing $\kappa = \dim \measured$, in general ${\lpreimage}$ can be thought of as the span of $\{l_1, \dots, l_{\kappa}\}$, a selected basis for $\measured$; plus $B_1(C)$. As a result we have $\dim {\lpreimage} = \kappa + r_X$, with $r_X  = \dim B_1(C)$. The space ${\lpreimage}$ should be understood as the target image of $\gamma_1 \ker \partial_1^D$. 

From here on we write $q = k - \kappa$ for readability. Note that $\measured^\perp \subset H^1(C)$ has exactly dimension $q$. Then we can pick $\{l^\star_{1}, l^\star_{2}, \dots, l^\star_{q}\}$ a basis of co-cycles corresponding to $\measured^\perp$ as a natural choice for $\sigma_{\text{cocycles}}$ is as follows:
\begin{equation}
    \sigma_{\text{cocycles}} = 
    \begin{bmatrix}
        l^\star_{1} \\
        l^\star_{2} \\
        \vdots \\
        l^\star_{q}
    \end{bmatrix}
\end{equation}
Indeed $\partial_1$ eliminates all non-cycles, while $\sigma_{\text{cocycles}}$ eliminates those cycles that are not in $\lpreimage$. And we therefore get:
\begin{equation}
    \ker \sigma_{\text{cocycles}} \cap \ker \partial_1 = L
\end{equation}
However, we are still lacking expansion. We will remediate that issue by introducing a second matrix $\sigma_{\text{boost}}$ such that:
\begin{equation}
    \norm{\sigma_{\text{boost}} u} + \norm{\sigma_{\text{cocycles}} u} +\norm{\partial_1 u} \geq d: u \not\in \lpreimage
\end{equation}

This condition is in fact much stronger than the relative expansion we need. We will then focus on the subset of chains that needs boosting
\begin{equation}
    S_d = \{ c \in \im \begin{bmatrix}
        \partial_1 & \sigma_{\text{cocycle}}
    \end{bmatrix}: 0< \norm{c} < d\}
\end{equation}
Those syndrome have to be boosted, which we will do through using random codes. From \Cref{lemma:random-coding-boosting} with $m = q + m_Z \leq O(n)$, and $M_Z = \dim C_0 \leq O(n)$ we get 
\begin{align}
    \sigma_{\text{boost}}&: C_1 \to \mathbb{F}_2^{t}, \quad t \leq O(d\log(n))\\
    \sigma_{\text{boost}} &= R\begin{bmatrix}
        H_{\text{logical}} \\
        \partial_1
    \end{bmatrix} \\
     \norm{\sigma_{\text{boost}} u} &+ \norm{\sigma_{\text{cocycles}} u} +\norm{\partial_1 u} \geq d: u \not\in \lpreimage
\end{align}

Note that the resulting ancilla satisfies a stronger version of expansion, which we call the $d$-threshold property.

\begin{definition}[$d$-threshold property]
Let $D_\bullet$ be an $m$-dimensional complex, $\gamma : D_\bullet \rightarrow C_\bullet$ a chain map. Then, for $i < m$ we say that $D$ has the $d$-threshold property relative to $\gamma_i$ if:
    \begin{align}
        \forall x \in D_i: \gamma_i(x) \not\in \gamma_i(Z_1(D)), \quad \norm{\partial_1^D x} \geq d
    \end{align}
\end{definition}

From these observations, we obtain a natural choice for $\sigma$ with the expected properties.

\begin{lemma}[Dense expanding map]
\label{lemma:expanding-pcm} 

Let $C_\bullet$ be a code, and $\measured \subset H_1(C)$ a subspace of logicals. Then there exists $\sigma : C_1 \to \F_2^s$ such that:

\begin{enumerate}
    \item There is a $1$-sparse chain map $\gamma_\bullet: \fib(\sigma) \to C_\bullet$ such that $\fib(\sigma)$ has the $\syst_1(C)$-threshold property relative to $\gamma_1$. 
    \item $\im [\gamma_1] = \measured$
    \item $\syst^1(\fib(\sigma)) \geq \syst^1(C)$
    \item $s \leq O(q+d\log(n))$
\end{enumerate}

\end{lemma}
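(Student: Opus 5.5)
We take $\sigma = \begin{bmatrix} \sigma_{\text{cocycles}} \\ \sigma_{\text{boost}} \end{bmatrix}: C_1 \to \F_2^{s}$, with $\sigma_{\text{boost}}$ produced by \Cref{lemma:random-coding-boosting} as above, so that $s = q + t \leq O(q + d\log n)$. This settles item 4 at once. Since $\sigma$ is nonzero only in degree $1$, the fiber $\fib(\sigma)$ is explicit:
\[
\fib(\sigma)_2 = C_2, \qquad \fib(\sigma)_1 = C_1, \qquad \fib(\sigma)_0 = \F_2^{s} \oplus C_0,
\]
with $\partial^{\fib}_1 u = (\sigma u, \partial_1 u)$ and $\partial^{\fib}_2 = \partial_2$. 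The chain map $\gamma_\bullet$ is the identity on $C_2$ and $C_1$ and the projection $\F_2^s \oplus C_0 \to C_0$ in degree $0$. Commutativity is immediate: $\gamma_0 \partial_1^{\fib} u = \partial_1 u = \partial_1 \gamma_1 u$. Each $\gamma_i$ is a coordinate inclusion or projection, hence $1$-sparse.

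\textbf{Items 1 and 2.} We have $Z_1(\fib(\sigma)) = \ker\sigma \cap \ker \partial_1$. We first show this equals $\lpreimage$.
\begin{itemize}
\item[$\supseteq$:] The cocycles $l^\star_j$ annihilate $\lpreimage$, since they pair trivially with $\measured$ and with boundaries. The boost rows vanish on $\lpreimage$ because $\sigma_{\text{boost}} = R\begin{bmatrix} H_{\text{logical}} \\ \partial_1\end{bmatrix}$ and $H_{\text{logical}}$ is built from the same cocycles $\measured^\perp$.
\item[$\subseteq$:] This follows from $\ker \sigma_{\text{cocycles}} \cap \ker\partial_1 = \lpreimage$.
\end{itemize}
Since $\gamma_1$ is the identity, $\gamma_1(Z_1(\fib)) = \lpreimage$, and $B_1(\fib) = B_1(C)$. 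So $[\gamma_1]$ maps $H_1(\fib) = \lpreimage/B_1(C)$ onto $\measured$, giving item 2.

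For the threshold property, take $x \in C_1$ with $x \notin \lpreimage$. Then
\[
\norm{\partial_1^{\fib} x} = \norm{\sigma_{\text{cocycles}}x} + \norm{\sigma_{\text{boost}}x} + \norm{\partial_1 x} \geq d,
\]
by the boosting inequality. With $d = \syst_1(C)$ this is exactly the $\syst_1(C)$-threshold property.

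\textbf{Item 3, the main obstacle.} Cocycles of $\fib(\sigma)$ in degree $1$ are the $u \in C^1$ with $\partial_2^T u = 0$, the same as $Z^1(C)$. Coboundaries form the larger space
\[
B^1(\fib) = \im \begin{bmatrix}\sigma^T & \partial_1^T\end{bmatrix} = B^1(C) + \im\sigma^T .
\]
We must show that every $u \in Z^1(C) \setminus (B^1(C)+\im\sigma^T)$ has weight at least $\syst^1(C)$. It suffices to show $\im \sigma^T \subseteq B^1(C) + \mathrm{span}\{l^\star_j\}$. Then $B^1(\fib)$ is spanned by coboundaries and genuine cocycles representing $\measured^\perp$. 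Any $u \notin B^1(\fib)$ is therefore in particular not in $B^1(C)$, so $\norm{u} \geq \syst^1(C)$.

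The delicate step is checking this for the boost rows. Their transposes are
\[
\sigma_{\text{boost}}^T = \begin{bmatrix} H_{\text{logical}}^T & \partial_1^T\end{bmatrix} R^T ,
\]
so they lie in the span of the rows of $H_{\text{logical}}$ (the cocycles $l^\star_j$) and of $B^1(C)$. Hence they create no new non-cocycle coboundaries. The point to verify carefully is that \Cref{lemma:random-coding-boosting} really produces $\sigma_{\text{boost}}$ in this factored form: it is $R$ composed with the stacked matrix of cocycles and $\partial_1$, rather than an arbitrary random matrix on $C_1$. That factored form is what keeps the cosystole from dropping.
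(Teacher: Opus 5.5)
Your proposal is correct and follows essentially the same route as the paper: the same $\sigma$ stacking $\sigma_{\text{cocycles}}$ over $\sigma_{\text{boost}}$, and the identity/projection chain map $\gamma_\bullet$. It uses the identification $Z_1(\fib(\sigma)) = \lpreimage$ for items 1--2, and $Z^1(\fib(\sigma)) = Z^1(C)$, $B^1(\fib(\sigma)) = B^1(C) + \im\sigma^T$ for item 3.

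Item 3 is not the delicate step you make it out to be. Since $B^1(C)\subseteq B^1(\fib(\sigma))$, every element of $Z^1(\fib(\sigma))\setminus B^1(\fib(\sigma))$ already lies in $Z^1(C)\setminus B^1(C)$, and this set inclusion is all the paper's final chain of inequalities uses. The factored form of $\sigma_{\text{boost}}$ matters elsewhere: it gives $\lpreimage\subseteq\ker\sigma$ and hence $\sigma\partial_2 = 0$, which is what makes $\fib(\sigma)$ a chain complex, and you should state that check explicitly.
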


\begin{proof}
    The previous conversation proves most points for $\sigma$ taken as follows: 
    \begin{equation}
        \sigma = \begin{bmatrix}
            \sigma_{\text{cocycles}} \\ \sigma_{\text{boost}}
        \end{bmatrix}, \quad \sigma: \F_2^n \to \F_2^s \cong \F_2^q \oplus \F_2^t
    \end{equation}
    And we take $\gamma_\bullet$ as follows:
    \begin{equation}
        \begin{tikzcd}[ampersand replacement=\&]
	\& {C_2} \& {C_1} \& {C_0} \\
	\&\& {\mathbb{F}_2^s } \\
	{C_2} \& {C_1} \& {C_0}
	\arrow[from=1-2, to=1-3]
	\arrow["{\gamma_2} =  \text{id}_{C_2}"{pos=0.7}, from=1-2, to=3-1]
	\arrow[from=1-3, to=1-4]
	\arrow["{\sigma}", from=1-3, to=2-3]
	\arrow["{\gamma_1} =  \text{id}_{C_1}"{pos=0.7}, from=1-3, to=3-2]
	\arrow["{\gamma_0} =  \text{id}_{C_0}"{pos=0.7}, from=1-4, to=3-3]
	\arrow[from=3-1, to=3-2]
	\arrow[from=3-2, to=3-3]
\end{tikzcd}
    \end{equation}

    The remaining point to prove is that of the co-systole. 
    While mapping fibers spontaneously preserve systoles, co-systoles can seem a bit trickier to preserve due to the addition of new redundancies through $\sigma^T$. 
    Fortunately, the structure of these redundancies is well controlled, as $\im \sigma^T = (\ker \sigma)^\perp = \lpreimage^\perp$, which should generally be understood as being constituted of co-boundaries, and some non-trivial co-cycles commuting with $L$. 
    
    We will see that the fact that $\sigma \partial_2 = 0$ (which is necessary for $\fib(\sigma)$ to be a valid chain complex) strongly restricts the systole.
    We have:
    \begin{align}
        Z^1(\fib(\sigma)) &= Z^1(C) \\
        B^1(\fib(\sigma)) &= B^1(C) + \im \sigma^T
    \end{align}
    Which, for the sake of convenience, we re-write as:
    \begin{align}
         B^1(C) + \im \sigma^T &= B^1(C) + (\ker \sigma)^\perp\\
    \end{align}
    From $\sigma \partial_2 = 0$, we have $B_1(C) \subset \ker \sigma$, and therefore $(\ker \sigma)^\perp \subset Z^1(C)$. This gives:
    \begin{align}
        \syst^1(\fib(\sigma)) &= \min_{ l \in Z^1(\fib(\sigma)) \setminus B^1(\fib(\sigma))}\norm{l }\\ 
        & = \min_{ l \in Z^1(C) \setminus B^1(C) + (\ker \sigma)^\perp}\norm{l } \\
        & \geq \min_{ l \in Z^1(C) \setminus B^1(C) }\norm{l } \\
        & = \syst^1(C)
    \end{align}
\end{proof}


However, $\sigma$ as chosen here is a dense matrix, and we are again reduced to having to sparsfiy the new stabilisers given by the rows of $\sigma$. 

This new starting point might seem disappointing as we had started with an equally dense matrix $\iota$. We highlight here that the actual contribution of that step is that sparsifying $\iota$ requires coming up with an expander complex, while here sparsifying $\sigma$ merely requires preserving the already present expansion.

We then show that the weight-reducing gauging gadgets of \cite{williamson2024gauging} are indeed expansion-preserving. We then use a dual version of those to build a similar expansion-preserving degree-reducing gadget.


\begin{remark}
    It is a question of general curiosity to understand how natural it is to base the ancilla $D_\bullet$ off a copy of $C_\bullet$. Interestingly, the co-systole preservation exhibited by \Cref{lemma:expanding-pcm} relies precisely on $D_2 = C_2$.
\end{remark}

\subsection{Main theorem}

\mainthm*

\begin{proof}
    We start by $D^{\text{dense}}_\bullet = \fib(\sigma)$, given by \Cref{lemma:expanding-pcm}. We obtain the following $1$-sparse chain map $\gamma_\bullet: \fib(\sigma) \to C_\bullet$:
    \begin{equation}
        \begin{tikzcd}
        	& {C_2} & {C_1} & {C_0} \\
        	&& T \\
        	{C_2} & {C_1} & {C_0}
        	\arrow[from=1-2, to=1-3]
        	\arrow["{\gamma_2}"{pos=0.8}, from=1-2, to=3-1]
        	\arrow[from=1-3, to=1-4]
        	\arrow["H", from=1-3, to=2-3]
        	\arrow["{\gamma_1}"{pos=0.8}, from=1-3, to=3-2]
        	\arrow["{\gamma_0}"{pos=0.8}, from=1-4, to=3-3]
        	\arrow[from=3-1, to=3-2]
        	\arrow[from=3-2, to=3-3]
        \end{tikzcd}
    \end{equation}
    
    Then $D^{\text{gauged}}_\bullet = \fib(\tau)$ is given by \Cref{lemma:gauging-gadget} applied to $D^{\text{dense}}_\bullet$. With $A_\bullet = C_\bullet$, and $B = T$. We obtain the following chain map $\pi_\bullet: D^{\text{gauged}}_\bullet \to D^{\text{dense}}_\bullet$:
    \begin{equation}
        \begin{tikzcd}
        	& {C_2} & {C_1} & {C_0} \\
        	{\mathcal{G}_3} & {\mathcal{G}_2} & {\mathcal{G}_1} \\
        	{C_2} & {C_1} & {C_0} \\
        	& T
        	\arrow[from=1-2, to=1-3]
        	\arrow[from=1-2, to=2-2]
        	\arrow["{\pi_2} =  \text{id}_{C_2}"{pos=0.8}, from=1-2, to=3-1]
        	\arrow[from=1-3, to=1-4]
        	\arrow[from=1-3, to=2-3]
        	\arrow["{\pi_1} =  \text{id}_{C_1}"{pos=0.8}, from=1-3, to=3-2]
        	\arrow["{\pi_0} =  \text{id}_{C_0}"{pos=0.8}, from=1-4, to=3-3]
        	\arrow[from=2-1, to=2-2]
        	\arrow[from=2-2, to=2-3]
        	\arrow[from=3-1, to=3-2]
        	\arrow[from=3-2, to=3-3]
        	\arrow[from=3-2, to=4-2]
        \end{tikzcd}
    \end{equation}
    Where $\pi_1$ is from \Cref{lemma:gauging-gadget}.
    Finally $D^{\text{sparse}}_\bullet = \tilde{D}^{\text{gauged}}_\bullet$ from the application of \Cref{lemma:dual-gauging-gadget}, with $B_\bullet = D^{\text{gauged}}_\bullet$, $A_1 = C_1$, and $G_1 = \mathcal{G}_2$. 
    We obtain the following chain map $\rho_\bullet : D^{\text{sparse}}_\bullet \to D^{\text{gauged}}_\bullet$:
    \begin{equation}
        \begin{tikzcd}
        	&& {C_2} & {C_1^{\oplus {\omega_\dagger+1}}} & {C_0} \\
        	& {\mathcal{G}_3} & {\mathcal{G}_2} & {\mathcal{G}_1 \oplus Q_0} \\
        	& {C_2} & {C_1} & {C_0} \\
        	{\mathcal{G}_3} & {\mathcal{G}_2} & {\mathcal{G}_1}
        	\arrow[from=1-3, to=1-4]
        	\arrow[from=1-3, to=2-3]
        	\arrow["{\rho_2} =  \text{id}_{C_2}"{pos=0.8}, from=1-3, to=3-2]
        	\arrow[from=1-4, to=1-5]
        	\arrow[from=1-4, to=2-4]
        	\arrow["{\rho_1}"{pos=0.8}, from=1-4, to=3-3]
        	\arrow["{\rho_0}"{pos=0.8}, from=1-5, to=3-4]
        	\arrow[from=2-2, to=2-3]
        	\arrow[from=2-3, to=2-4]
        	\arrow[from=3-2, to=3-3]
        	\arrow[from=3-2, to=4-2]
        	\arrow[from=3-3, to=3-4]
        	\arrow[from=3-3, to=4-3]
        	\arrow[from=4-1, to=4-2]
        	\arrow[from=4-2, to=4-3]
        \end{tikzcd}
    \end{equation}
     Where $\rho_1$ is from \Cref{lemma:dual-gauging-gadget}.
    The construction sequence gives us a natural chain map:
     \begin{equation}
         D^{\text{sparse}}_\bullet \xrightarrow{\rho} D^{\text{gauged}}_\bullet \xrightarrow{\pi} D^{\text{dense}}_\bullet \xrightarrow{\gamma} C_\bullet
     \end{equation}
     So we define $\gamma_\bullet^{\text{sparse}} \coloneq  \gamma \circ \pi \circ \rho$ as a map $ \gamma_\bullet^{\text{sparse}} : D^{\text{sparse}}_\bullet \to C_\bullet$. From \Cref{lemma:gauging-gadget,lemma:dual-gauging-gadget} we have that $\gamma_\bullet^{\text{sparse}}$ is indeed $1$-sparse.

    \begin{enumerate}
        \item From \Cref{lemma:expanding-pcm}, $D^{\text{dense}}$ has the $\syst_1(C)$-threshold property relative to the $1$-sparse map $\gamma_1$. Then, \Cref{lemma:gauging-gadget} guarantees that $D^{\text{gauged}}$ preserves the $\syst_1(C)$-threshold property relative to the $1$-sparse map $\gamma_1 \circ \pi_1$. Finally, \Cref{lemma:dual-gauging-gadget} guarantees $D^{\text{sparse}}$ is $1$-expanding relative to the $1$-sparse map $\gamma_1 \circ \pi_1 \circ\rho_1$.
        
        \item From the construction sequence, we get the following series of isomorphisms, guaranteeing that $[\gamma_\bullet^{\text{sparse}} ]$ has the right image:
     \begin{align}
         H_1(D^{\text{sparse}}) &\xrightarrow[{[\rho_1]}]{\sim} H_1(D^{\text{gauged}}) \\
         &\xrightarrow[{[\pi_1]}]{\sim } H_1(D^{\text{dense}})\xrightarrow[{[\gamma_1]}]{\sim}  \measured
     \end{align}
     \item Similarly, we have:
     \begin{align}
         \syst^1(D^{\text{sparse}}) &\geq \syst^1(D^{\text{gauged}})  \\ &\geq \syst^1(D^{\text{dense}}) \geq \syst^1(C)
     \end{align}

    \item We have already established that $\gamma_\bullet^{\text{sparse}} $ is $1$-sparse.
    From \Cref{lemma:dual-gauging-gadget}, $\partial_1^{D^{\text{sparse}}}$ is $O(\omega)$-sparse.
    \item Finally, with $\omega \leq O(1)$ since the input code is sparse:
    \begin{align}
        \dim D^{\text{sparse}}_0 & \leq \dim A_0 + \dim G_0 + O( \omega_\dagger) \dim A_1 \\
        & \leq n + sn + O( s) n \\
        & \leq O(sn) \\
        &\leq O(n (k-\kappa + d \log(n)))
    \end{align} 
    \end{enumerate}
     
\end{proof}

\subsection{Open Questions}

We have some reasons to think that the scaling given by \Cref{theorem:main} is close to optimal. On one hand (non-fast) parallel surgery can be achieved with $\tilde{O}(nk)$ overhead \cite{cowtan2025parallel}, and requires at least $\Omega(k^2)$ overhead \cite{baspin2026lowerboundoverheadsurgery}. On the other, the logical cocyles use to build $\sigma_{\text{cocycles}}$ all have support at least $d$, suggesting the $d \log(n)$ term is also unavoidable. We further note that for constant rate codes our construction is indeed optimal, up to polylogarithmic factors. 

It also remains open how to obtain sparse metachecks. In our construction, the limiting factor is currently the data layers used in \Cref{lemma:dual-gauging-gadget}, as they impose that the modified boundary map $\tilde{\partial}_2^A$ acts densely on those layers. In essence we are still lacking an expansion-preserving gadget that can reduce the degree of $Z$ checks while not increasing the $X$ weight.


\section{Sparsification gadgets}

\subsection{Gauging gadget (weight reducing)}
\label{subsec:gauging}

The following amounts to an exposition of the non-decongested gadgets from \cite{williamson2024gauging} from a chain map perspective. We only need to also show that those gadgets are expansion-preserving.

\begin{lemma}\label{lemma:gauging-gadget}
    Let
    \begin{equation}
        \begin{tikzcd}
        	{A_2} & {A_1} & {A_0} \\
        	& T
        	\arrow[from=1-1, to=1-2]
        	\arrow[from=1-2, to=1-3]
        	\arrow["\sigma", from=1-2, to=2-2]
        \end{tikzcd}
    \end{equation}
    with $\sigma \in \F_2^{s \times n}$
    Then there exists $\mathcal{G}_\bullet$, and $\tau_\bullet$:
    \begin{equation}
        \begin{tikzcd}
        	& {A_2} & {A_1} & {A_0} \\
        	{\mathcal{G}_3} & {\mathcal{G}_2} & {\mathcal{G}_1}
        	\arrow[from=1-2, to=1-3]
        	\arrow["{\tau_2}", from=1-2, to=2-2]
        	\arrow[from=1-3, to=1-4]
        	\arrow["{\tau_1}", from=1-3, to=2-3]
        	\arrow[from=2-1, to=2-2]
        	\arrow[from=2-2, to=2-3]
        \end{tikzcd}
    \end{equation}
    Such that $\fib(\tau)$ satisfies the following properties:
    \begin{enumerate}
        \item Given $\gamma_\bullet$ the natural projection from $\fib(\sigma)$ to $A_\bullet$ such that $\fib(\sigma)$ has the $d$-threshold property relative to $\gamma_1$. 
        
        Then there exists a map $\pi_\bullet : \fib(\tau) \to \fib(\sigma)$ such that $\gamma_\bullet \circ \pi_\bullet$ is sparse; and $\fib(\tau)$ has the $d$ threshold property relative to $\gamma_1 \circ \pi_1$. 
        \item  The projection $\pi_1$ induces the isomorphism $[\pi_1]: H_1(\fib(\tau)) \xrightarrow{\sim} H_1(\fib(\sigma))$
        \item $\syst^1(\fib(\tau)) \geq \syst^1(\fib(\sigma))$
        \item The map $\tau_1$ is $O(s)$-bounded, while $(\partial_1^{\mathcal{G}})$ $O(1)$-sparse, and $\tau_1^T$ is $1$-bounded
        \item $\dim \mathcal{G}_1 \leq sn$ 
    \end{enumerate}
\end{lemma}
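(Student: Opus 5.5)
We construct $\mathcal{G}_\bullet$ row by row from $\sigma$, following the non-decongested gauging gadget of \cite{williamson2024gauging}. For each row $\sigma_j$ ($j=1,\dots,s$), let $S_j = \operatorname{supp}(\sigma_j)\subset A_1$. We attach a connected sparse graph $\Gamma_j$ with vertex set $S_j$, for instance a path or a bounded-degree expander on $|S_j|\le n$ vertices. We take $\mathcal{G}_2=\bigoplus_j \F_2^{V(\Gamma_j)}$ and $\mathcal{G}_1=\bigoplus_j \F_2^{E(\Gamma_j)}\oplus \F_2^{s}$, or, more uniformly, cellulate each $\Gamma_j$ as a graph complex. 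We take $\mathcal{G}_3$ to be a cycle basis of each $\Gamma_j$, so that the degree-$2$ cells kill graph homology. The map $\tau_1$ sends a qubit $e\in A_1$ to its copies, i.e.\ the vertices labelled $e$ in every $\Gamma_j$ with $e\in S_j$. This makes $\tau_1$ $O(s)$-bounded, $\tau_1^T$ $1$-bounded, and $\dim\mathcal{G}_1\le\sum_j|S_j|\le sn$. We set $\tau_2=0$ on $A_2$, or define it via $\sigma\partial_2=0$, with $\mathcal{G}$'s differential chosen so that the square commutes. The chain map $\pi:\fib(\tau)\to\fib(\sigma)$ is the identity on $A_\bullet$, and on the gadget part it collapses each $\Gamma_j$ onto the single check $j\in T$. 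Concretely, $\pi$ sends the $\F_2^{V(\Gamma_j)}$ summand to $\F_2^{\{j\}}$ by summing coordinates, and kills edge and cycle cells. We check that $\pi$ commutes with the differentials: summing the vertex syndromes of $\Gamma_j$ on $x$ recovers $\sigma_j x$, because each edge contributes twice. Sparsity of $\gamma\circ\pi$ is immediate, since $\gamma\circ\pi$ is the identity on $A_1$ and zero elsewhere.

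For item 2 we use the long exact sequence of the fiber, or a direct computation. A $1$-cycle of $\fib(\tau)$ is a pair $(g,a)$ with $\partial a=0$ and $\tau_1 a=\partial^{\mathcal{G}} g$. Because each $\Gamma_j$ is connected, $\tau_1 a|_{\Gamma_j}$ is a boundary of an edge chain exactly when it has even parity, i.e.\ when $\sigma_j a=0$. The filling $g$ is unique modulo graph cycles, and these are boundaries thanks to $\mathcal{G}_3$. Hence $[\pi_1]$ is a bijection. For item 3 we repeat the argument of \Cref{lemma:expanding-pcm}. The new coboundaries are contained in $B^1(A)+\im\tau_1^T$, and $\im\tau_1^T$ projects into $\im\sigma^T+\ldots$. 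Since $\tau_1^T$ restricted to each $\Gamma_j$ equals $\sigma_j^T$ composed with vertex sums modulo coboundaries of $\Gamma_j$, the cosystole bound reduces to the one already proven.

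The main obstacle is item 1, preservation of the $d$-threshold property. Take $x=(g,a)\in\fib(\tau)_1$ with $a\notin\gamma_1\pi_1(Z_1)=\lpreimage$. Its boundary is $(\tau_1 a+\partial^{\mathcal G}g,\ \partial a)$. We intend to show $\norm{\partial x}\ge\norm{\partial a}+\#\{j:\sigma_j a\ne 0\}$. For each $j$ with $\sigma_j a=1$, the vector $\tau_1a|_{\Gamma_j}$ has odd parity and so cannot be cancelled by $\partial^{\mathcal{G}}g$, which forces at least one nonzero entry in $\Gamma_j$. This gives $\norm{\partial x}\ge\norm{\partial a}+\norm{\sigma a}=\norm{\partial^{\fib(\sigma)}\pi_1x}\ge d$, where the last inequality is the threshold property of $\fib(\sigma)$. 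The delicate point is to ensure that the grading makes $\partial^{\mathcal{G}}$ act on the vertex layer as a parity-preserving map. If instead the vertex layer sits in degree $1$, the inequality must be run on syndromes landing in the edge layer, and there the argument should instead use that a connected graph has nonzero cut for odd-parity vertex sets. Finally, $\partial^{\mathcal G}$ is $O(1)$-sparse because $\Gamma_j$ is bounded degree and its cycle basis can be chosen with bounded-length cycles, for example by taking a path or ladder. This is the feature that keeps the construction free of decongestion.
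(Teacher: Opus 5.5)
\textbf{Gap: your cosystole argument (item 3) does not control weights, and your suggested path or ladder graphs violate it.}

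Items 1, 2, 4 and 5 essentially match the paper's argument once the grading is fixed: edges go in $\mathcal{G}_2$, vertices in $\mathcal{G}_1$ with no extra $\F_2^s$ summand, and cycle cells in $\mathcal{G}_3$. Your opening assignment swaps vertices and edges, although your later arguments use the correct grading. The parity argument giving $\norm{\partial^{\fib(\tau)}(g,a)}\geq\norm{\partial^{\fib(\sigma)}a}$ only needs each $\Gamma_j$ connected. The homology isomorphism only needs $\mathcal{G}_3$ to kill every graph cycle.

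One correction there: $\tau_2=0$ is not admissible. The chain-map condition $\partial^{\mathcal{G}}\tau_2=\tau_1\partial_2^A$ fails as soon as $\partial_2^A a_2$ meets some $S_j$. You must define $\tau_2(a_2)$ as an edge chain filling the even-parity set $\tau_1\partial_2^A a_2$ on each $\Gamma_j$, e.g.\ a sum of paths pairing its points, which is what the paper does.

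The real problem is item 3. Your argument compares cohomology classes, but the cosystole is a weight bound. $\fib(\tau)$ has new cocycle representatives partly supported on the edge qubits, and you never bound their weight. The required step is a cleaning argument:
\begin{enumerate}
\item Take a cocycle $(e^*,a^*)$ of $\fib(\tau)$. Since $\mathcal{G}_3$ spans all graph cycles, $e^*$ vanishes on every graph cycle, so $e^*=(\partial_2^{\mathcal{G}})^T v^*$ for some vertex cochain $v^*$.
\item Adding the coboundary of $(v^*,0)$ gives the cohomologous cocycle $(0,a^*+\tau_1^Tv^*)$. Its $A_1$-part is a nontrivial cocycle of $\fib(\sigma)$.
\item Complementing $v^*$ on a component $\Gamma_j$ changes that part only by $\sigma_j^T$, so you may assume $|v^*_j|\leq|V_j|/2$.
\item Because $\tau_1^T$ is $1$-bounded, this gives $\syst^1(\fib(\sigma))\leq\norm{a^*}+|v^*|$.
\end{enumerate}
To conclude $\syst^1(\fib(\sigma))\leq\norm{a^*}+\norm{e^*}$, you need $|v^*_j|\leq|(\partial_2^{\mathcal{G}})^Tv^*_j|$ whenever $|v^*_j|\leq|V_j|/2$. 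That is, each $\Gamma_j$ must have Cheeger constant at least $1$. This is why the paper takes the $\Gamma_j$ to be expanders from \Cref{lemma:constant-degree-expanders} and appeals to the distance bound from gauging.

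A path, or a ladder, has Cheeger constant $O(1/|S_j|)$, and the bound genuinely fails. Suppose a minimum-weight nontrivial cocycle $a^*$ of $\fib(\sigma)$ contains a run $U$ of $r$ consecutive path vertices of $S_j$, and let $u\in\F_2[V_j]$ be the indicator of $U$. Then $(\partial_2^{\mathcal{G}})^Tu$ has weight at most $2$. The cocycle $((\partial_2^{\mathcal{G}})^Tu,\;a^*+\tau_1^Tu)$ is still nontrivial and has weight at most $\norm{a^*}-r+2$. So the cosystole drops whenever this happens, and nothing in your construction prevents it.

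Your reason for preferring paths or ladders, namely bounded-length cycle bases that avoid decongestion, is also unnecessary here. The lemma only asks for the edge--vertex map to be sparse. The paper explicitly allows $\partial_2^{\fib(\tau)}$, and hence the face map, to be dense.
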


    We begin by writing $\sigma$ as a matrix of $s$ row vectors, each with length $n$:
    \begin{equation}
        \sigma = \begin{bmatrix}
            \quad & \vec{\sigma}_1 & \quad \\ 
            \quad & \vec \sigma_2 & \quad\\
            \quad & \vdots \\
            \quad & \vec \sigma_s &\quad
        \end{bmatrix} 
    \end{equation}

    Then to each of those vectors, we associate a connected expander graph $G_j = (V_j, E_j)$, with $|V_j| = |\supp(\sigma_j)|$ from \Cref{lemma:constant-degree-expanders}, along with an injection $\iota_j: \supp(\sigma_j) \to V_j$. The graphs can each be augmented with faces $F_j$ such that
    \begin{equation}
        H_2(\mathcal{G}^j) = 0, \quad \mathcal{G}^j = \F_2[F_j] \to \F_2[E_j] \to \F_2[V_j]
    \end{equation}
    We then define a chain map $\tau^j: A_\bullet \to \mathcal{G}^j$:
    \begin{equation}
        \tau^j_1(a_1) = \iota(a_1), \quad \tau = \bigoplus_j \tau^j
    \end{equation}
    To define $\tau_2$, we make the observation that $\partial a_2 \in A_1$ has even overlap with all $\vec{\sigma}_j$'s, so we can group this support into disjoint pairs: $\supp(\partial a_2) \cap \supp(\vec{\sigma}_j)\cong \{ (a_l, a_{l+1})\}_{l=1}^{M}$, for some $M \in \N$. As $G_j$ is connected, for each of those pairs, there exists a path $p_j((a,b)) \subset E_j$ in $G_j$, with startpoint $a$, and endpoint $b$. We thus write
    \begin{equation}
        \tau^j_2(a_2) = \bigoplus_j \sum_{(a,b) \in\supp(\partial a_2) \cap \supp(\vec{\sigma}_j) } p_j((a, b))
    \end{equation}
    This gives us the following chain map:
    \begin{equation}
        \begin{tikzcd}
        	& {A_2} & {A_1} & {A_0} \\
        	{\F_2[F_j]} & {\F_2[E_j]} & {\F_2[V_j]}
        	\arrow[from=1-2, to=1-3]
        	\arrow["{\tau_2^j}", from=1-2, to=2-2]
        	\arrow[from=1-3, to=1-4]
        	\arrow["{\tau_1^j}", from=1-3, to=2-3]
        	\arrow[from=2-1, to=2-2]
        	\arrow[from=2-2, to=2-3]
        \end{tikzcd}
    \end{equation}
    
    We then sum over all the individual graph gadgets. Writing $\mathcal{G} = \bigoplus_j \mathcal{G}^j$, $\tau_\bullet = \bigoplus_j \tau^j_\bullet $, we have:
    \begin{equation}
        \begin{tikzcd}
        	& {A_2} & {A_1} & {A_0} \\
        	{\mathcal{G}_3} & {\mathcal{G}_2} & {\mathcal{G}_1}
        	\arrow[from=1-2, to=1-3]
        	\arrow["{\tau_2}", from=1-2, to=2-2]
        	\arrow[from=1-3, to=1-4]
        	\arrow["{\tau_1}", from=1-3, to=2-3]
        	\arrow[from=2-1, to=2-2]
        	\arrow[from=2-2, to=2-3]
        \end{tikzcd}
    \end{equation}

    \begin{remark}
    Note that we do not use the decongested version of the construction of \cite{williamson2024gauging}, as we are not concerned with the boundedness of the resulting $\partial_2^{\fib(\tau)}$.
    \end{remark}

    \begin{proofof}{\Cref{lemma:gauging-gadget}}
        \begin{enumerate}

            \item 

            Intuitively, that property is preserved simply because the graph gadgets, through $\tau$, preserve the parity of violations of row of $\sigma$, and therefore the weight of the syndrome. The "parity of violations" for the gadgets can be studied formally by focusing on $\tau^*: A_1 \to H_1(\mathcal{G})$ \footnote{Note that $H_1(\mathcal{G}) \cong \F_2^s \cong T$, this hints at $\tau$ preserving the syndrome information given by $\sigma$.}. 
            
            From the definition of $\tau^j_1$, we read that $\tau^j_1 \ket{a_1} = \ket{\iota(a_1)} \braket{\sigma_j | a_1}$, so $|\supp(\tau^j_1 \ket{\vec{a}})|$ has the same parity as $\braket{\sigma_j|a}$. 
            From $G_j$ being connected, all vectors of the same parity are the same class in $H_1(\mathcal{G}^j)$, hence $(\tau^j_1)^* \cong \sigma_j$; where we write $(\tau^j_1)^* : A_1 \to H_1(\mathcal{G}^j)$. 
            By linearity we obtain $\tau_1^* \cong \sigma$. 
            
            Consequently, we have $\ker \tau_1^* = \ker \sigma$, which allows us to use the natural projection $\pi_1: A_1 \oplus \mathcal{G}_2 \to A_1$, under which the kernels are preserved:
            \begin{align}
                Z_1(\fib(\sigma)) = \pi_1 Z_1(\fib(\tau))
            \end{align}
            The chain map $\pi_\bullet$ is then given by:
            \begin{align}
                \pi_2 &: A_2 \oplus \mathcal{G}_3 \to A_2, \quad \pi_2(a_2 \oplus g_3) = a_2 \\
                \pi_1&: A_1 \oplus \mathcal{G}_2 \to A_1, \quad \pi_1 (a_1 \oplus g_2)= a_1 \\
                \pi_0&: A_0 \oplus \mathcal{G}_1 \to A_0 \oplus H_0(\mathcal{G}) \cong T \\  
                \pi_0& \ket{a, v} = \ket{a, [v]}
            \end{align}
            All of the maps above are trivially $1$-sparse except $\pi_0$, for which we can verify $\gamma_0 (T) = 0$, i.e. $\gamma_0 \circ \pi_0 = \text{id}_{A_0}$ which is $1$-sparse.
            
            We then move on to lower bounding the syndrome. First, we trivially have:
            \begin{align}
                &\norm{\partial_1^{\fib(\tau)} \ket{{a}, {e}}} \\
                = &\norm{\tau_1 \ket{{a}} + \partial_2^{\mathcal{G}}\ket{{e}}}  + \norm{\partial_1^{A} \ket a} \\
                = &\norm{\tau_1 \ket{{a}} + \partial_2^{\mathcal{G}}\ket{{e}}}  + \norm{\partial_1^{A} \pi_1 \ket{a,e}}
            \end{align}
            While $\pi_0$ itself is not sparse the composition $\gamma_0 \circ \pi_0$ is, as it eliminates the $T$ component entirely.
            
            Now we break down the vector $\ket{{e}}$ into graph-wise components, i.e. $\ket{{e}} = \bigoplus_j \ket{e_j}$. This gives 

            \begin{align}
                &\norm{\tau_1 \ket{{a}} + \partial^{\mathcal{G}}\ket{{e}}} \\ =& \norm{\bigoplus_j \tau^j_1 \ket{{a}} + \partial^{\mathcal{G}_j}\ket{e_j}} \\
                \geq & \norm{\bigoplus_j (\tau^j)^* \ket{a}} \cdot \min_j \syst_1(\mathcal{G}^j) \\
                \geq& \norm{\bigoplus_j (\tau^j)^* \ket{a}} =  \norm{\tau^* \ket{a}} \\
                 = & \norm{\sigma \ket{a}}\\
                 = & \norm{\sigma \pi_1 \ket{a, e}}\\
            \end{align}
            Hence 
            \begin{equation}
                \norm{\partial_1^{\fib(\tau)} \ket{{a}, {e}}}  \geq \norm{\partial_1^{\fib(\sigma) } \pi_1 \ket{{a}, {e}}} 
            \end{equation}

            Thus we get that if
            \begin{equation}
                (\gamma_1 \circ \pi_1 ) \ket{a, e} \not \in (\gamma_1 \circ \pi_1) Z_1(\fib(\tau))
            \end{equation}
            Or equivalently, 
            \begin{equation}
                \gamma_1 (\pi_1  \ket{a, e} ) \not \in \gamma_1 (\pi_1 Z_1(\fib(\tau))) = \gamma_1 Z_1(\fib(\sigma)))
            \end{equation}
            Then we get:
            \begin{equation}
                 \norm{\partial_1^{\fib(\tau)} \ket{{a}, {e}}}  \geq \norm{\partial_1^{\fib(\sigma) } \pi_1 \ket{{a}, {e}}} \geq d
            \end{equation}
            
            \item We already have $\tau_1^* \cong \sigma$, hence $\ker [\tau] = \ker[\sigma]$. 
            Since $H_2(\mathcal{G}) = 0$ from assumptions, we can then compute the homology group of the mapping cone $H_1(\fib(\tau)) = H_1(\fib(\sigma))$ \cite{weibel1994introduction, baspin2025fastsurgeryquantumldpc}.
            
            \item $\syst^1(\fib(\tau)) \geq \syst^1(\fib(\sigma))$ is preserved from expansion, see \cite{williamson2024gauging}, and also \cite{baspin2025fastsurgeryquantumldpc, yuan2026quantumweightreductionlayer}.

            \item By definition, the $\partial^{\mathcal{G}}_2$ is $2$-bounded, and $(\partial^{\mathcal{G}}_2)^T$ is $d_0$-bounded, hence $\partial^{\mathcal{G}}_2$ is $O(d_0)$-sparse, or $O(1)$-sparse. 

            On the other hand, $\tau_1$ is $s$-bounded, and $\tau^T_1$ is $1$-bounded.
            

            \item $\dim \mathcal{G}_2 \leq s \cdot \max_j |V_j| \leq sn$ by construction. 
        \end{enumerate}
    \end{proofof}

\begin{remark}
    While we emphasised the mapping fiber picture for the gauging gadgets to simplify proofs, we also briefly sketch how it can be seen as an instance of a code embedding \cite{yuan2026unified}. The point of this remark is to highlight the duality between the gauging gadget, and the dual gauging gadget, which we will express as a generalised layer code. 
    We first write the complex to be sparsified in an embedding-friendly fashion.
    \begin{equation}
        \fib(\sigma) = \begin{tikzcd}
        	{A_2} & {A_1} & {T \oplus A_0}
        	\arrow["{\partial_2^A}", from=1-1, to=1-2]
        	\arrow["\begin{array}{c} \begin{bmatrix} \sigma \\ \partial^A_1 \end{bmatrix} \end{array}", from=1-2, to=1-3]
        \end{tikzcd}
    \end{equation}
    We can modify the gauging construction slightly to use a single graph $G$ on $n$ vertices, without changing the rest of the construction. The height-2 cone corresponding to the code embedding can then be written as follows:
    \begin{align}
        \begin{tikzcd}[ampersand replacement=\&]
        	\&\& {A_2} \\
        	\&\& {A_1} \\
        	{T\otimes \F_2[F]} \& {T \otimes \F_2[E]} \& \begin{array}{c} \begin{matrix} T \otimes \F_2[V] \\\oplus A_0 \end{matrix} \end{array}
        	\arrow["{\partial_2^A}", from=1-3, to=2-3]
        	\arrow["{p = \tau_2}"', from=1-3, to=3-2]
        	\arrow["{ \begin{bmatrix}\tau_1 \\ \partial^A_1 \end{bmatrix} }", from=2-3, to=3-3]
        	\arrow[from=3-1, to=3-2]
        	\arrow[from=3-2, to=3-3]
        \end{tikzcd}
    \end{align}
    This corresponds to replacing \emph{some} of the checks with some larger, sparser, check layers; namely checks are replaced with graph vertices.
    As $H_1(\mathcal{G}) = 0$, the resulting embedded complex then corresponds to the target code:
    \begin{equation}
        A'^{\textbf{g}} \cong \fib(\sigma)
    \end{equation}
    This rephrasing might be useful for the reader, as in the next section we will instead replace qubits with data layers; namely qubits will be replaced with graph vertices.
\end{remark}

\subsection{Dual-gauging gadget (degree reducing)}
\label{subsec:dual-gauging}

In the following section we describe how to reduce the degree of the surgery ancilla. The gadget can essentially be described as a generalised layer code \cite{yuan2026quantumweightreductionlayer, yuan2026unified} where the qubits with very high degree are replaced by clusters of vertices from an expander graph, instead of the usual surface code layer.

\begin{lemma}[Dual-gauging gadget] \label{lemma:dual-gauging-gadget}
    Let
    \begin{equation}
        B_\bullet = \begin{tikzcd}[ampersand replacement=\&]
	{B_2} \& {G_1 \oplus A_1} \&\& {G_0 \oplus A_0}
	\arrow["{\left[\begin{smallmatrix}
            \partial_2^G \\
            \partial_2^A
        \end{smallmatrix}\right]}", from=1-1, to=1-2]
	\arrow["{\left[\begin{smallmatrix}
            \partial_1^G & \tau_1 \\
            0 & \partial_1^A
        \end{smallmatrix}\right]}", from=1-2, to=1-4]
\end{tikzcd}
    \end{equation}
    Such that $\partial_1^A, \partial_1^G$ are $\omega$-sparse, while  $\tau_1$ is $\omega_\dagger$-bounded and $(\tau_1)^T$ is $\omega$-bounded. 
    Then there exists output code $\tilde{B}_\bullet$:

    \begin{equation}
        \tilde{B}_\bullet = \begin{tikzcd}
        	{\tilde{B}_2} & {G_1 \oplus A_1^{\oplus \omega_\dagger + 1}} & {G_0 \oplus A_0 \oplus Q_0}
        	\arrow["\partial_2^{\tilde{B}}", from=1-1, to=1-2]
        	\arrow["\partial_1^{\tilde{B}}", from=1-2, to=1-3]
        \end{tikzcd}
    \end{equation}
    
    Such that:
    \begin{enumerate}
        \item 
        Write the reduced complex $A'_\bullet$:
        \begin{equation}
            A_\bullet'  = B_2 \to A_1 \to A_0
        \end{equation}
        There exists a natural map $\tilde{\gamma}_\bullet: B_\bullet \to A_\bullet'$:
        \begin{align}
            \tilde{\gamma}_2&: B_2 \to B_2, \quad \tilde\gamma_2 = \text{id}_{B_2} \\
            \tilde{\gamma}_1&: G_1 \oplus A_1 \to A_1, \quad \tilde\gamma_1(g_1 \oplus a_1) = a_1\\
            \tilde{\gamma}_0&: G_0 \oplus A_0\to A_0, \quad \tilde\gamma_0(g_0 \oplus a_0) = a_0
        \end{align}
        If $B_\bullet$ has the $d$-threshold property relative to $\tilde{\gamma}_1$ for $d = \syst_1(A')$, then there exists a map $\rho_\bullet : \tilde{B}_\bullet \to B_\bullet$ such that $\tilde{\gamma}_\bullet \circ \rho_\bullet$ is sparse; and $\tilde{B}_\bullet$ is $1$-systolic expanding relative to $\tilde{\gamma}_1 \circ \rho_1$.
        \item 
        The map $\rho_1:  \tilde{B}_1\to B_1$ induces an isomorphism $[\rho_1]: H_1(\tilde{B}) \xrightarrow{\sim} H_1(B)$
        \item $\syst^1(\tilde{B}) \geq \syst^1(B)$
        \item $\partial_1^{\tilde{B}}$ is $O(\omega)$-sparse
        \item $\dim \tilde{B}_0 = \dim A_0 + \dim G_0 +  O(\omega_\dagger)\dim A_1$ 
    \end{enumerate}
\end{lemma}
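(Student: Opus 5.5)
The plan is to run a layer-code style construction dual to \Cref{lemma:gauging-gadget}. The only dense part of $\partial_1^{B}$ is $\tau_1$, whose columns have weight up to $\omega_\dagger$: each qubit $a \in A_1$ feeds up to $\omega_\dagger$ gadget vertices. I will replace every such qubit by a cluster of $\omega_\dagger + 1$ copies $a^{(0)}, \dots, a^{(\omega_\dagger)}$.

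\textbf{Construction.} Copy $a^{(0)}$ carries the column $\partial_1^A e_a$. Copy $a^{(i)}$, for $i \geq 1$, carries only the $i$-th nonzero entry of $\tau_1 e_a$. The copies are tied together by new equality checks $q_{a,i} = a^{(i)} + a^{(i+1)}$, arranged along a path; these span $Q_0$, so $\dim Q_0 \leq \omega_\dagger \dim A_1$. Every column of $\partial_1^{\tilde B}$ then has weight at most $\omega + 2$. Every row has weight $O(\omega)$, because $\tau_1^T$ is $\omega$-bounded and the original $A_0$ and $G_0$ rows are unchanged, which gives item 4. Item 5 is read off the construction.

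\textbf{Chain maps and homology.} I set $\tilde{\partial}_2 b = (\partial_2^G b, \Delta \partial_2^A b)$, where $\Delta$ places a copy of $\partial_2^A b$ on every layer. This is the dense metacheck acknowledged in the open questions. I then define $\rho_\bullet$ as follows:
\begin{itemize}
  \item $\rho_2 = \text{id}$;
  \item $\rho_1(g, (a^{(i)})_i) = (g, a^{(0)})$;
  \item $\rho_0$ is the identity on $G_0 \oplus A_0$, and $\rho_0(q_{a,i}) = \sum_{j > i} (\tau_1 e_a)^{(j)}$ in $G_0$.
\end{itemize}
The $Q_0$ part of $\rho_0$ is a telescoping correction, chosen so that $\rho_0 \tilde\partial_1 = \partial_1 \rho_1$ holds even when the copies disagree. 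This $\rho_0$ is not sparse. However, $\tilde\gamma_0$ kills $G_0$, so $\tilde\gamma \circ \rho$ is the projection onto $a^{(0)}$ and $A_0$, which is $1$-sparse.

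For item 2, every cycle of $\tilde B$ has all copies equal, since the $Q_0$ checks must vanish. Hence $Z_1(\tilde B) \cong Z_1(B)$ via $\rho_1$ (surjective using $\Delta$), and $\rho_1 \tilde\partial_2 = \partial_2$. For item 3, $\tilde B$ is again a fiber-type extension in which the new cochains in $Q_0$ only add coboundaries. A cocycle of $\tilde B$ restricts, after summing the copy layers, to a cocycle of $B$ of no larger weight, and is trivial when that restriction is trivial. The argument mirrors the co-systole computation in \Cref{lemma:expanding-pcm}.

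\textbf{Expansion (item 1, the main obstacle).} Given $x \in \tilde B_1$, write $w_Q$ for the number of violated $Q_0$ checks. I will consider two cases.
\begin{itemize}
  \item If $w_Q \geq \syst_1(A')$, then $\norm{\tilde\partial_1 x} \geq w_Q \geq \min\{\dist, \syst_1\}$ directly.
  \item Otherwise, make $x$ consistent: for each qubit, overwrite the copies to agree with $a^{(0)}$. Each violated equality check costs at most $\omega_\dagger$ flips, and each flip changes the syndrome by $O(\omega)$. This yields $x'$ with $\rho_1 x' = \rho_1 x$ and
  \[
    \norm{\tilde\partial_1 x} \geq \norm{\partial_1^B \rho_1 x'} - O(\omega\,\omega_\dagger)\, w_Q .
  \]
  Now the $d$-threshold property of $B$ applies to $\rho_1 x'$: either $\tilde\gamma_1\rho_1 x \in \tilde\gamma_1 Z_1(B)$, so the distance term is $0$, or the first term is at least $d = \syst_1(A')$. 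In the latter case combine the two estimates as $\norm{\tilde\partial_1 x} \geq \max(w_Q,\ d - O(\omega\omega_\dagger) w_Q)$.
\end{itemize}
The combination in the second case is where I expect trouble. It only gives $\epsilon = \Omega(1/(\omega\omega_\dagger))$, not $1$. To reach the stated constant I would try two things. First, since only $a^{(0)}$ enters $\tilde\gamma_1$, flip the \emph{other} copies instead, so that only the $\tau$-syndrome moves. Second, charge each violated $Q_0$ check against the single $G_0$ syndrome bit it shifts, so that the parity argument of \Cref{lemma:gauging-gadget} yields $\norm{\tilde\partial_1 x} \geq \norm{\partial_1^B \rho_1 x}$ directly, giving threshold and hence $1$-expansion.
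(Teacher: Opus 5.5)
Your items 2--5 are sound and match the paper. Your telescoping $\rho_0$ is the paper's $\phi=\tilde\tau_1\circ(\text{id}\otimes\Pi)$ specialised to a path. Your $\Delta$ is $\partial_2^A\otimes\ket{\vec{1}_V}$. Summing over copy layers is the paper's $H^1(Q)\cong A_1$ argument.

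\textbf{The gap is item 1, and it lies in the gadget, not in your estimate: a path has no edge expansion.}

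\emph{Why your two fixes fail.} A single violated $Q_0$ check between $a^{(i)}$ and $a^{(i+1)}$ can separate $a^{(0)}$ from up to $\omega_\dagger$ disagreeing copies. Each of those copies shifts a \emph{different} $G_0$ bit.
\begin{itemize}
\item Your second fix is therefore false. Take $a^{(0)}=1$ and all other copies $0$. Then $\norm{\partial_1^{\tilde B}x}=\norm{\partial_1^A e_a}+1$, whereas $\norm{\partial_1^B\rho_1 x}=\norm{\partial_1^A e_a}+\norm{\tau_1 e_a}$.
\item Your first fix only lowers the cost per flip. It does not lower the number of flips per violated check, which can still be $\omega_\dagger$.
\end{itemize}

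\emph{The failure is real, not an artefact of your proof.} Take the following instance.
\begin{itemize}
\item Let $B_2=G_1=0$ and $A_1=A_0=\F_2^d$, with $\partial_1^A$ the repetition checks $x_i+x_{i+1}$ (indices mod $d$). Then $\syst_1(A')=d$.
\item Let the rows of $\tau_1$ be $d$ copies of $\{a,b\}$ for each $b\neq a$, followed by $d$ copies of $\{a\}$.
\item Every nonzero $x$ has $\norm{\tau_1 x}\geq d$. So all hypotheses hold with $\omega=2$, and $\tilde\gamma_1 Z_1(B)=0$.
\item Set every copy of every qubit to $1$, except the last $d$ copies of $a$, which you set to $0$. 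This $x$ has $\norm{\partial_1^{\tilde B}x}=1$.
\item But $\tilde\gamma_1\rho_1 x=\vec{1}$ is at distance $d$ from $\tilde\gamma_1\rho_1 Z_1(\tilde B)=0$.
\end{itemize}
So the expansion constant is at most $1/d$ here. Moreover, $\omega_\dagger$ grows with $s=O(q+d\log n)$ in \Cref{theorem:main}, so even your $\Omega(1/\omega_\dagger)$ bound is not a constant.

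\textbf{What the paper does instead.} Two ideas are missing.
\begin{itemize}
\item Connect each cluster of $\omega_\dagger+1$ copies by a constant-degree graph with Cheeger constant at least $\omega+2$ (\Cref{lemma:constant-degree-expanders}).
\item Decode each cluster to its \emph{majority} value, not to $a^{(0)}$.
\end{itemize}

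With these, the argument goes as follows.
\begin{itemize}
\item The correction $\tilde a_{\text{corr}}$ touches at most $|V|/2$ vertices per cluster. So it violates at least $(\omega+2)\norm{\tilde a_{\text{corr}}}$ checks of $Q_0$.
\item Undoing the correction moves the $G_0\oplus A_0$ syndrome by at most $(\omega+1)\norm{\tilde a_{\text{corr}}}$: at most one bit per copy through $\tilde\tau_1$, and at most $\omega$ per copy through $\tilde\partial_1^A$.
\item Hence $\norm{\partial_1^{\tilde B}\tilde b}\geq\norm{\partial_1^B b_{\text{dec}}}+\norm{\tilde a_{\text{corr}}}$.
\item If $\tilde\gamma_1 b_{\text{dec}}\notin\tilde\gamma_1 Z_1(B)$, the $d$-threshold property gives syndrome at least $d$.
\item Majority decoding may flip the distinguished copy, so in general $\rho_1\tilde b_{\text{dec}}\neq\rho_1\tilde b$. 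The case $\tilde\gamma_1 b_{\text{dec}}\in\tilde\gamma_1 Z_1(B)$ is therefore closed by $\dist(\tilde\gamma_1\rho_1\tilde b,\tilde\gamma_1\rho_1 Z_1(\tilde B))\leq\norm{\tilde a_{\text{corr}}}$, not by distance $0$.
\end{itemize}

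Anchoring at $a^{(0)}$ would break this estimate even on an expander. A lone flipped $a^{(0)}$ violates only $O(\omega)$ checks, but aligning the other copies to it needs $\omega_\dagger$ flips. Finally, the expander has degree $O(\omega)$, so item 4 survives, with $\dim Q_0=O(\omega\,\omega_\dagger)\dim A_1$.
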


As previously mentioned, our approach relies on replacing the qubits in $A_1$ by some data layers \cite{yuan2026quantumweightreductionlayer, yuan2026unified, williamson2024layer}. 
Using \Cref{lemma:constant-degree-expanders} pick $X = (V,E)$ an $(\omega + 2)$-expander graph on $\omega_\dagger + 1$ vertices, then each qubit in $A_1$ is replaced by a cluster $V$. 
A number $\omega_\dagger$ of those vertices will be used to sparsify $\tau_1$, while the last one will be used for the already sparse $\partial_1^A$. The data layers then correspond to the complex $Q_\bullet$.
\begin{align}
    Q_\bullet =& Q_1 \xrightarrow{\partial^Q_1} Q_0 \\
    =& A_1 \otimes \F_2[V] \xrightarrow{\text{id} \otimes \partial^V} A_1 \otimes \F_2[E]
\end{align}
Note that $Q_1 = A_1^{\oplus \omega_\dagger + 1}$. For convenience, we will use the notation $V = \{ 0, 1, \dots, \omega_\dagger\}$, and $\F_2[V] = \spn\{ \ket{0}, \ket{1}, \dots, \ket{\omega_\dagger}\}$.

Then, the output code $\tilde{B}_\bullet$ is defined as the code embedding obtained from the following height-2 cone 
\begin{equation}
    \begin{tikzcd}
        {B_2} & \\
        {G_1 \oplus Q_1} & {Q_0} \\
        {G_0 \oplus A_0}
        \arrow["f", from=1-1, to=2-1]
        \arrow["{\partial^Q}", from=2-1, to=2-2]
        \arrow["{f'}", from=2-1, to=3-1]
    \end{tikzcd}
\end{equation}
This means we have:
\begin{equation}
    \tilde{B}_\bullet = \begin{tikzcd}
	{B_2} & {G_1 \oplus Q_1} && {G_0 \oplus A_0 \oplus Q_0}
	\arrow["f", from=1-1, to=1-2]
	\arrow["\begin{array}{c} \begin{bmatrix}f' \\ 0 \ \partial^Q \end{bmatrix} \end{array}", from=1-2, to=1-4]
\end{tikzcd}
\end{equation}
The explicit form of the maps above is given as follows:
\begin{equation}
    \begin{tikzcd}[ampersand replacement=\&]
	{B_2} \&\& \\
	\\
	{G_1 \oplus A_1 \otimes \F_2[V]} \&\& {A_1 \otimes \F_2[E]} \\
	\\
	{G_0 \oplus A_0}
	\arrow["{\begin{bmatrix} \partial_2^G \\ \tilde{\partial}_2^{A}  \end{bmatrix}  }"', from=1-1, to=3-1]
	\arrow["{\text{id} \otimes \partial^V}", from=3-1, to=3-3]
	\arrow["{\begin{bmatrix} \partial_1^G & \tilde{\tau}_1 \\ 0 & \tilde{\partial}_1^A \end{bmatrix}}"', from=3-1, to=5-1]
\end{tikzcd}
\end{equation}
Where we write
\begin{equation}
    \tilde{\partial}_2^A = \partial_2^A \otimes \ket{\vec{1}_V}, \qquad \tilde{\partial}_1^A = \partial_1^A \otimes \bra{v_\star}.
\end{equation}
Where $v_\star$ is a distinguished vertex, and we write $\ket{\vec{1}_V} = \sum_{v \in V} \ket{v}$. Thus the original syndrome information from $A_0$ is carried by the distinguished copy $a_1\otimes \ket{v_\star}$.
It remains to sparsify $\tau_1$. Since $\tau_1$ is $\omega_\dagger$-bounded, then for any basis element $a_1 \in A_1$:
\begin{equation}
    \card{\tau_1 (a_1)} \leq \omega_\dagger.
\end{equation}
Choosing $|V|=  \omega_\dagger+1$, we may therefore pick, for each
basis element $a_1$, an injective labeling
\begin{equation}
    \iota_{a_1}: \supp(\tau_1 (a_1)) \to V\setminus\{v_\star\}.
\end{equation}
We then define the twisted map $\tilde{\tau}_1$ by
\begin{equation}
    \tilde{\tau}_1 (a_1\otimes \ket{v}) = \sum_{\substack{g_0 \in \supp(\tau_1 a_1) \\ \iota_{a_1}(g_0) = v}} g_0 
\end{equation}

\begin{proofof}{\Cref{lemma:dual-gauging-gadget}}
    \begin{enumerate}

        \item 
        We start by definition the chain map $\rho_\bullet$. Naturally, we take $\rho_2: \tilde{B}_2 = B_2 \to B_2$ as the identity on $B_2$. 
        
        We then define $\rho_1: G_1 \oplus A_1 \otimes \F_2[V] \to G_1 \oplus A_1$ as the projection on the distinguished vertex $v_\star\in V$. 
        \begin{equation}
            \rho_1 = \text{id}_G \oplus \text{id}_A \otimes \bra{v_\star}
        \end{equation}

        For $\rho_0$, we begin by assuming a map of the form $\rho_0 (g_0 \oplus a_0\oplus e_0) = (g_0 + \phi(e_0)) \oplus a_0$, for some function $\phi: A_1 \otimes \F_2[E] \to G_0$. Through the commutation condition, we get this map has to obey the following relation:
        \begin{align}
            \label{eq:phi-commutation}
            &\phi \circ (\text{id} \otimes \partial^V) (a \otimes \ket{v} )= \nonumber \\ 
            & \qquad \tau_1 a  \braket{v_\star | v} - \tilde{\tau}_1 (a \otimes \ket{v})
        \end{align}
        We see here that the role of $\phi$ is to resolve the incongruence between $\tau_1 \circ \rho_1$ computing some syndrome reading values only from $\F_2[v_\star]$, while $\tilde{\tau}_1$ computes the syndrome from the entirety of $\F_2[V]$. 

        The simplest solution is to add some correction to the syndrome in $G_0$ whenever the values of $v$ and $v_\star$ differ. From the perspective of $\phi$ this can be accomplished by, along a path from $v \in V$ to $v_\star \in V$ computing the parity of the edges encountered. Write $\ket{\Phi_{v \rightsquigarrow v_\star }} \in \F_2[E]$ a choice of path from $v$ to $v_\star$, and $\ket{\nu} \in \F_2[V]$ an arbitrary state on the vertex space. Then:
        \begin{equation}
            \bra{\Phi_{v \rightsquigarrow v_\star }} \partial^V \ket{\nu} = \braket{v|\nu} - \braket{v_\star|\nu}
        \end{equation}
        Summing over $v$'s we get:
        \begin{align}
            (&\sum_v \ket{v}\bra{\Phi_{v \rightsquigarrow v_\star }}) \partial^V \ket{\nu} \\
            = &\sum_v \ket{v} (\braket{v|\nu} - \braket{v_\star|\nu}) \\
            = &\ket{\nu} - \braket{v_\star|\nu} \ket{\vec{1}_V}
        \end{align}
        Where $\ket{\nu} - \braket{v_\star|\nu} \ket{\vec{1}_V}$ is a vector with value $0$ at $v$ if $\braket{v|\nu} =\braket{v_\star|\nu}$ and $1$ otherwise: it encodes locations that need correction relative to $v_\star$.
        We can then write:
        \begin{gather}
        \Pi: \F_2[E] \to \F_2[V]
        \Pi = \sum_v \ket{v}\bra{\Phi_{v \rightsquigarrow v_\star }} \\
            \phi = \tilde{\tau}_1 \circ (\text{id}_A \otimes \Pi)
        \end{gather}
        And one can readily verify that $\phi$ satisfies equation \Cref{eq:phi-commutation}.
        Note that resulting $\rho_0$ is not a sparse map, due to $\Pi$ computing some parity along potentially long paths. However, since $\tilde{\gamma}_0 (G_0)= 0$, and therefore $\tilde{\gamma}_0  \circ \phi = 0$ we still have $\tilde{\gamma}_0\circ \rho_0$ $1$-sparse. 
        
        We now move on to lower bound the expansion. Let $\tilde{b} = g\oplus \tilde{a} \in G_1 \oplus A_1 \otimes \F_2[V]$. We write
        \begin{equation}
            \tilde{a}  = \bigoplus_{a} a\otimes \ket{\nu_a}, \quad \ket{\nu_a} \in \F_2[V]
        \end{equation}
        Then for any $\tilde{a}$ there exists a correction $\tilde{a}_{\text{corr}} = \bigoplus_{a} a\otimes \ket{\nu_a^{\text{corr}}}$ with $\ket{\nu_a^{\text{corr}}} \in \F_2[V]$ such that $\ket{\nu_a}+ \ket{\nu_a^{\text{corr}}} \in \ker \partial^V$, and we can pick $\norm{\ket{\nu_a^{\text{corr}}}} \leq |V|/2$.
        
        We write $\tilde{a}_{\text{dec}} = \tilde{a} + \tilde{a}_{\text{corr}}$ the "decoded" version of $\tilde{a}$. As $\ker\partial^V = \spn\{\ket{\vec{1}_V}\}$, there exists $a_{\text{dec}}$ such that 
        \begin{equation}
            \tilde{a}_{\text{dec}} = a_{\text{dec}} \otimes \ket{\vec{1}_V}
        \end{equation}
        Correspondingly, we write $\tilde{b}_{\text{dec}} = g \oplus \tilde{a}_{\text{dec}}$, and $b_{\text{dec}} = g \oplus a_{\text{dec}}$. Crucially, we have:
        \begin{align}
            \partial_1^{\tilde{B}} \tilde{b}_{\text{dec}} = & f'(\tilde{b}_{\text{dec}}) \\
            = & (\partial_1^G g + \tilde{\tau}_1\tilde{a}_{\text{dec}}) \oplus \tilde{\partial}_1^A \tilde{a}_{\text{dec}} \\ 
            =& (\partial_1^G g + \tau_1 a_{\text{dec}}) \oplus \partial_1^Aa_{\text{dec}} \\
            =& \partial_1^B b_{\text{dec}}
        \end{align}

        We can then compute:
        \begin{align*}
            \norm{\partial_1^{\tilde{B}} \tilde{b}} = & \norm{\partial_1^{\tilde{B}} (\tilde{b}_{\text{dec}} + \tilde{a}_{\text{corr}})} \\
            =  & \norm{f'(\tilde{b}_{\text{dec}}+ \tilde{a}_{\text{corr}} )} + \norm{\text{id} \otimes\partial^V\tilde{a}_{\text{corr}}} \\
            \geq &  \norm{f'(\tilde{b}_{\text{dec}})} - \norm{ \tilde{\tau}_1(\tilde{a}_{\text{corr}})} \\& - \norm{ \tilde{\partial}_1^A(\tilde{a}_{\text{corr}})}+  (\omega+2)\norm{\tilde{a}_{\text{corr}}} \\
            \geq & \norm{\partial_1^B(b_{\text{dec}})} - \norm{\tilde{a}_{\text{corr}}} \\ &- \omega \norm{\tilde{a}_{\text{corr}}} +  (\omega + 2)\norm{\tilde{a}_{\text{corr}}} \\
            \geq &\norm{\partial_1^B(b_{\text{dec}})} +\norm{\tilde{a}_{\text{corr}}}
        \end{align*}

        If $\tilde{\gamma}_1(b_{\text{dec}}) \not \in \tilde{\gamma}_1(Z_1(B))$, then $\norm{\partial_1^B(b_{\text{dec}})} \geq d$ and thus $\norm{\partial_1^{\tilde{B}} \tilde{b}}  \geq d$. 
        Now suppose that $\tilde{\gamma}_1(b_{\text{dec}}) \in \tilde{\gamma}_1(Z_1(B))$.  This means that $\tilde{\gamma}_1 \circ \rho_1 (\tilde{b}_\text{dec}) \in \tilde{\gamma}_1 \circ \rho_1 Z_1(\tilde{B})$. In which case then 
        \begin{align}
            &\dist(\tilde{\gamma}_1 \circ \rho_1 (\tilde{b}), \tilde{\gamma}_1 \circ \rho_1 Z_1(\tilde{B})) \\ 
            & \leq \dist(\tilde{\gamma}_1 \circ \rho_1 ( \tilde{b}), \tilde{\gamma}_1 \circ \rho_1 (\tilde{b}_\text{dec}) ) \\
            & \leq \norm{\tilde{\gamma}_1 \circ \rho_1 ( \tilde{a}_\text{corr})} \\
            & \leq \norm{\tilde{a}_{\text{corr}}} \\
            & \leq \norm{\partial_1^{\tilde{B}} \tilde{b}}
        \end{align}

        Thus $\tilde{B}_\bullet$ is $1$-systolic expanding relative to $\tilde{\gamma}_1 \circ \rho_1$.
        
        \item We have $Z_1(\tilde{B}) \subset \ker(\text{id} \otimes \partial^V) = G_1 \oplus A_1 \otimes \ket{\vec{1}_V}$. On that space, we have $\tilde{\tau}_1 (a \otimes \ket{ \vec{1}}_V) = \tau_1 (a)$ and $\tilde{\partial}_1^A (a \otimes \ket{ \vec{1}}_V) = \partial_1^A a$.
        Hence, we have $\rho_1 Z_1(\tilde{B}) = Z_1(B)$, and $\rho_1 B_1(\tilde{B}) = B_1(B)$.
        
        This shows that the projection $\rho_1$ induces the isomorphism $[\rho_1]: H_1(\tilde{B}) \xrightarrow{\sim} H_1(B)$.
        
        \item We can compute $H^1(\tilde{B})$ from the following embedded co-complex \cite{yuan2026unified}:
        \begin{equation}
            B_2 \leftarrow G_1 \oplus H^1(Q) \leftarrow B_0
        \end{equation}

        As previously noted, the complex $\F_2[V] \to \F_2[E]$ corresponds to a connected graph. Therefore the coboundaries induced by $\F_2[E]$ connect any two vertices, giving $H^1(\mathcal{G}) \cong \F_2$. 
        Here this gives $H^1(Q) = A_1 \otimes H^1(\mathcal{G}) \cong A_1 \otimes \F_2 \cong A_1$. 
        The coboundary maps $(\tilde{\partial}_2^A)^T, (\tilde{\partial}_1^A)^T, (\tilde{\tau}_1)^T$ behave as $(\partial_2^A)^T, (\partial_1^A)^T, (\tau_1)^T$ on $H^1(Q) \cong A_1$; 
        and therefore $(\partial_2^{\tilde{B}})^T, (\partial_1^{\tilde{B}})^T$ act as $(\partial_2^{B})^T, (\partial_1^{B})^T$ on $G_1 \oplus H^1(Q)$, giving:
        \begin{equation}
            \syst^1(\tilde{B}) \geq \syst^1(B)
        \end{equation}

        \item We have that $\tilde{\tau}_1$ is $1$-bounded and $\tilde{\partial}_1^A$ is $\omega$-bounded, while $\text{id} \otimes \partial^V$ is $d_0 \lceil  \frac{\omega+2}{h_0}\rceil$-bounded, or $O(\omega)$-bounded, and $\partial_1^G$ is $\omega$-bounded, hence $\partial_1^{\tilde{B}}$ is $O(\omega)$-bounded.
        
        On the other hand, we have that $(\tilde{\tau}_1)^T$ is $\omega$-bounded and $(\tilde{\partial}_1^A)^T$ is $\omega$-bounded too, while the transpose of $\text{id} \otimes \partial^V$ is $2$-bounded, and $(\partial_1^G)^T$ is $\omega$-bounded, hence $(\partial_1^{\tilde{B}})^T$ is $O(\omega)$-bounded.
        
        \item By definition, $\dim Q_0 = O(\omega \omega_\dagger) \dim A_1$, hence $\dim \tilde{B}_0 \leq \dim A_0 + \dim G_0 + O( \omega\omega_\dagger) \dim A_1$
    \end{enumerate}
\end{proofof}


\section{Auxillary lemmas}

\begin{lemma}
\label{lemma:constant-degree-expanders}There exist constants $d_0, h_0$ such that for every $\lambda >0$ there exists a family $\{X_m\}_{m\ge 1}$ of $d_0 \lceil \frac{\lambda}{h_0}\rceil$-regular graphs $X_m$ on $m$ vertices with 
\begin{equation}
    h(X_m) = \min_{\substack{ 0<|S|\le |V(X_m)|/2}} \frac{|E(S,V(X_m)\setminus S)|}{|S|} \geq \lambda
\end{equation}
\end{lemma}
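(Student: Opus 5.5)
The plan is to reduce this to the standard existence of constant-degree edge-expander families and then amplify the expansion by taking unions of independent copies. First, fix an absolute base family: by standard results (e.g.\ random $d_0$-regular graphs, or explicit Margulis/Lubotzky--Phillips--Sarnak constructions, padded to handle every $m$), there are constants $d_0, h_0 > 0$ and, for every $m \ge 1$, a $d_0$-regular (multi)graph $Y_m$ on $m$ vertices with Cheeger constant $h(Y_m) \ge h_0$. For small $m$, where regularity or simplicity constraints can bite, I would allow multigraphs. A complete graph with suitably multiplied edges also works there, since $h$ is then trivially large. This base case is the only external input, and I would cite it rather than reprove it.

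Second, amplify. Set $r = \lceil \lambda / h_0 \rceil$ and define $X_m$ as the edge-disjoint union (as a multigraph) of $r$ copies of $Y_m$ on the same vertex set $V$, with $|V| = m$. Each vertex then has degree $r d_0 = d_0 \lceil \lambda/h_0 \rceil$, as required. For any $S$ with $0 < |S| \le m/2$, edge boundaries add across the copies:
\begin{equation}
    |E_{X_m}(S, V\setminus S)| = r\,|E_{Y_m}(S, V\setminus S)| \ge r h_0 |S| \ge \lambda |S| .
\end{equation}
This gives $h(X_m) \ge \lambda$.

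The main obstacle is the precise form of the base statement. The uniform constants $d_0, h_0$ must hold for \emph{every} $m$, not just an infinite subsequence, and one has to decide whether simple graphs are needed. If simple graphs are required, I would instead take $r$ random $d_0$-regular graphs (or random permutation-based graphs) and argue that they are edge-disjoint with positive probability for large $m$, falling back to a direct small-$m$ construction with sufficiently large degree. In the paper's applications, multigraphs suffice: the graphs only serve as the vertex--edge incidence $\partial^V$ over $\F_2$, and the uses in \Cref{lemma:gauging-gadget} and \Cref{lemma:dual-gauging-gadget} need only connectivity, expansion, and bounded degree. So I would state the lemma for multigraphs and note this explicitly.
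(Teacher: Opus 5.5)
Your proposal is correct and follows the same route as the paper: it takes $\lceil \lambda/h_0\rceil$ copies of a base $d_0$-regular family with $h(X_m)\ge h_0$ for every $m$, and uses that edge boundaries add across copies. The paper cites Lemma 2.1 of \cite{dinur2007pcp} for that base family, which is stated for every number of vertices and so settles your ``every $m$'' concern; stacking copies on a common vertex set also produces a multigraph there, consistent with your remark that multigraphs suffice for the gadgets.
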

\begin{proof}
    It suffices to take $\lceil \frac{\lambda}{h_0}\rceil$ copies of the graphs from Lemma 2.1 of \cite{dinur2007pcp}.
\end{proof}

\begin{lemma}[Random coding boosting]
    \label{lemma:random-coding-boosting}
    Let $P: \F_2^n \to \F_2^m$, and
    \begin{equation}
        S_d = \{ c \in \im P: 0< \norm{c} < d\}
    \end{equation}
    there exists $R: \mathbb{F}_2^m \to \mathbb{F}_2^t$, such that
    \begin{equation}
        \forall c \in S_d, \quad  \norm{R c} \geq d
    \end{equation}
    With
    \begin{equation}
        t \leq O\left(d \log(em/d)\right)
    \end{equation}
\end{lemma}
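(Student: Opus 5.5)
The plan is a probabilistic-method argument with a union bound over $S_d$. Choose $R$ uniformly at random from $\F_2^{t\times m}$, so every entry is an independent unbiased bit. Fix a nonzero $c\in\F_2^m$. Then $Rc$ is uniformly distributed on $\F_2^t$: each coordinate $\langle r_i, c\rangle$ is an independent fair coin, because $c\neq 0$. Hence
\begin{equation}
\Pr\left[\norm{Rc} < d\right] = 2^{-t}\sum_{j<d}\binom{t}{j} \leq 2^{-t}\left(\frac{et}{d}\right)^{d},
\end{equation}
using the standard estimate on the volume of a Hamming ball.

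Next we count the vectors to be boosted. The set $S_d$ lies inside the set of all nonzero vectors of $\F_2^m$ of weight below $d$, so
\begin{equation}
|S_d| \leq \sum_{j<d}\binom{m}{j} \leq \left(\frac{em}{d}\right)^{d}.
\end{equation}
The structure $c\in\im P$ is not needed at all. This is what makes the bound independent of $n$ and dependent only on $m$ and $d$.

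By a union bound over $S_d$, a good $R$ exists as soon as
\begin{equation}
\left(\frac{em}{d}\right)^{d}\left(\frac{et}{d}\right)^{d} < 2^{t}.
\end{equation}
Taking logarithms, this reads $t > d\log_2(em/d) + d\log_2(et/d)$. Set $t = C d\log(em/d)$ for a sufficiently large constant $C$. Then $d\log(et/d) = d\log(eC\log(em/d))$, which is $O(d\log\log(em/d)) + O(d\log C)$. For large $C$ this is dominated by $(C-2)\,d\log(em/d)$, since $\log(em/d)\geq 1$ when $d\leq m$. With this choice the failure probability is strictly below $1$, so some $R$ satisfies $\norm{Rc}\geq d$ for every $c\in S_d$, and $t\leq O(d\log(em/d))$.

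The main obstacle is the self-referential term $\log(et/d)$ in the Hamming-ball bound for $Rc$. It must be absorbed by choosing $C$ large enough. This step works because $\log(et/d)$ grows only like $\log C + \log\log(em/d)$, while the main term grows linearly in $C$.

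There are two edge cases. If $d>m$, the bound $|S_d|\leq 2^m$ still holds, and $t = O(d)$ suffices by the same computation. If $S_d$ is empty, any $R$ works.
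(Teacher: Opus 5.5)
Your proof is correct and follows the same argument as the paper. Both take a uniformly random $R$, use that $Rc$ is uniform on $\mathbb{F}_2^t$ for $c\neq 0$, apply a union bound over $S_d$ with $|S_d|\le (em/d)^d$, and choose $t=\Theta(d\log(em/d))$. The one difference is in bounding the Hamming ball: the paper imposes $t\ge 4(d-1)$ and uses $2^{tH_2((d-1)/t)}$, so the failure exponent is a constant fraction of $t$, whereas you use $(et/d)^d$ and absorb the resulting $d\log(et/d)$ term by taking $C$ large. Your separate treatment of the case $d>m$ is a point the paper leaves implicit.
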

\begin{proof}
    We closely follow the proof of Theorem 4.2.1 in \cite{guruswami2025essential}.
    Pick a random matrix $R \in \F_2^{t \times m}$. Then for any $c \in \F_2^m$ the vector $R c$ is random, hence:
    \begin{align}
        \Pr[\norm{Rc} < d] \leq \frac{\{\text{\# of $\F_2^t$ vectors of weight $ \leq d-1 $}\}}{\{\text{\# of vectors in $\F_2^t$}\}} 
    \end{align}
    The numerator is controlled by a sum of binomial coefficients
    \begin{equation}
        \{\text{\# of $\F_2^t$ vectors of weight $ \leq d-1 $}\} \leq \sum_{i=0}^{i = d-1} \binom{t}{i}
    \end{equation}
    With $H_2$ the binary entropy function, we have the following upper bound:
    \begin{equation}
        \sum_{i=0}^{i = d-1} \binom{t}{i} \leq 2^{t H_2((d-1)/t)}
    \end{equation}
    As we can always take $t \geq 4(d-1)$ as it does not contradict the desired result, we have $H_2((d-1)/t)$ an upper bounded constant, which we write $ \epsilon = H_2((d-1)/t) < 1$.
    Using a union bound we get:
    \begin{equation}
        \label{eq:probability-expansion}
        \Pr[\exists c\in S_d: \norm{Rc} < d] \leq \card{S_d} \cdot 2^{-(1-\epsilon) t}
    \end{equation}
    As long as $\card{S_d} < 2^{(1-\epsilon) t}$ then \Cref{eq:probability-expansion} gives that the probability of failure is less than $1$, and therefore there exists a suitable matrix $R$. For that condition to be satisfied it is sufficient to take
    \begin{equation}
        (1-\epsilon)t > \log\card{S_d}
    \end{equation}
    From \cite{jukna2011extremal} Proposition 1.4 we have the following upper bound:
    \begin{equation}
    \card{S_d} \leq \sum_{i=0}^{i=d-1} \binom{m}{i} \leq \left(\frac{e m}{d}\right)^d
    \end{equation}
    We then choose, for some sufficiently large constants:
    \begin{equation}
        t = \Theta(d \log(em/d))
    \end{equation}
    
\end{proof}

\bibliography{refs}
\end{document}